\newtheorem{theorem}{Theorem}
\newtheorem{lemma}{Lemma}
\newtheorem{corollary}{Corollary}
\newtheorem{assumption}{Assumption}
\newtheorem{remark}{Remark}
\DeclareMathOperator{\diag}{diag}
\DeclareMathOperator{\col}{col}
\DeclareMathOperator{\sgn}{sgn}
\DeclareMathOperator{\tr}{tr}
\newcommand*\mystrut[1]{\vrule width0pt height0pt depth#1\relax}
\begin{document}
\newcommand\sForAll[2]{ \ForAll{#1}#2\EndFor} % snappy version of \ForAll...\EndFor
\newcommand\sIf[2]{ \If{#1}#2\EndIf}          % snappy version of \If...\EndIf

\title{Distributed Adaptive Time-Varying Optimization with Global Asymptotic Convergence}

\author{Liangze Jiang, Zheng-Guang Wu, and Lei Wang 

\thanks{The authors are with College of Control Science and Engineering, Zhejiang University, Hangzhou 310027, China (e-mail: zetojiang; nashwzhg; lei.wangzju @zju.edu.cn). (\em Corresponding author: Zheng-Guang Wu.)}}

\maketitle

\begin{abstract}
In this note, we study distributed time-varying optimization for a multi-agent system. We first focus on a class of time-varying quadratic cost functions, and develop a new distributed algorithm that integrates an average estimator and an adaptive optimizer, with both bridged by a Dead Zone Algorithm. Based on a composite Lyapunov function and finite escape-time analysis, we prove the closed-loop global asymptotic convergence to the optimal solution under mild assumptions. Particularly, the introduction of the estimator relaxes the requirement for the Hessians of cost functions, and the integrated design eliminates the waiting time required in the relevant literature for estimating global parameter during algorithm implementation. We then extend this result to a more general class of time-varying cost functions. Two examples are used to verify the proposed designs.
\end{abstract}

\begin{IEEEkeywords}
Distributed optimization, time-varying cost functions, global asymptotic convergence, adaptive gain.
\end{IEEEkeywords}

\section{Introduction}\label{Notations}
Designing and reconstructing the dynamic behavior of groups through optimization and control techniques propels the advancement of distributed optimization theory \cite{Nedic09,wang2010control,gharesifard2013distributed}, which also penetrates into physical applications such as sensor networks \cite{Rabbat04}, robot networks \cite{bhattacharya2011distributed,QIN2022110113} and power systems \cite{Mahmoud16,huang2020distributionally}. Recently, there is a trend towards modeling and solving large-scale optimization problems in dynamic environments (e.g., see \cite{su2009traffic,Simonetto20,LI2023100904,HATANAKA2016210,Shorinwa24} and the references therein). To cope with constantly changing environmental information, the modeling of cost functions in dynamic environments is typically based on real-time information {\em acquisition} by agents (see Fig. \ref{fig.model})\cite{su2009traffic,Simonetto20,LI2023100904,HATANAKA2016210,Zhou22,Shorinwa24,Intelligent2017}. 
\begin{figure}[!h]
\centerline{\includegraphics[width=0.8\columnwidth]
    {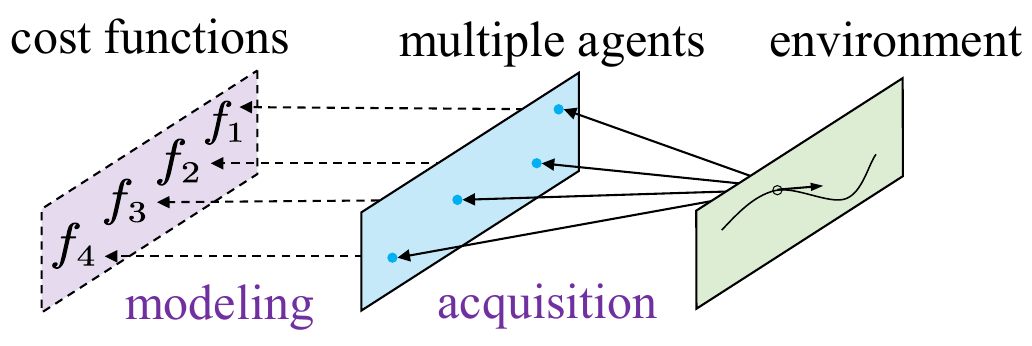}}
    \caption{Modeling of cost functions in dynamic environments.}
    \label{fig.model}
\end{figure}
However, the dynamic nature of the environment may disrupt this process, which in turn leads to the `fragility' of cost functions. For instance, in multi-camera localization of moving objects, cameras with fixed positions fail to capture image of mobile objects outside their field of view\cite{HATANAKA2016210}. In multi-robot target tracking, obstacles or limited perceptual distance prevent some robots from obtaining target information\cite{Zhou22,Shorinwa24}. When information acquisition is hindered due to environmental changes, weak or missing local cost functions cannot guarantee their uniform strong convexity about decision variables (e.g., see Example 2 in Section \ref{section.simulation}). Moreover, a realistic topic in autonomous optimization is to reduce the dependence of algorithms on global parameter of the cost functions and the communication graph\cite{ZOU2020}. These observations prompt us to investigate distributed time-varying optimization.

Intuitively, during the search for the time-varying optimal solution, gradient information provides only error feedback relative to the current optimal solution. By introducing the time derivative of the gradient as compensation\cite{su2009traffic,Simonetto16,Fazlyab18}, the modified gradient descent method can solve the time-varying optimization problem exactly\footnote{The considered time-varying optimization setting is different from the online optimization\cite{Simonetto20,LI2023100904} where the structural information on the cost functions is unavailable and non-zero tracking errors are pursued. In contrast, in our problem setting, each agent can construct a complete local cost function by adequately acquiring information about the environment.}. This basic idea is further expanded to distributed versions to tackle large-scale time-varying optimization problems in \cite{Ren17,Chu22,Wang22,Sun23,Ding22,Huang20}, some of which consider equality \cite{Chu22,Ding22,Wang22,Sun23} and inequality \cite{Sun23} constraints. In \cite{Sun17}, a robust control method is introduced to solve distributed time-varying quadratic optimization (DTQO) with coupled cost functions. However, as mentioned earlier, the cost functions established in dynamic environments may exhibit time-varying and non-strongly convex properties. In such cases, the existing algorithms cannot be directly applicable, as they require the Hessians of all local cost functions to be positive definite\cite{Sun17,Chu22,Ding22,Wang22,Huang20,Sun23,Santilli24,Zheng24}, identical\cite{Ren17,Huang20}, diagonal \cite{Wang22} and time-invariant\cite{Ding22,Santilli24}. Although one design in \cite{Ren17} considers nonpositive Hessians of the local cost functions, its assumption of constant upper bounds on certain complex functions limits the types of cost functions. Moreover, its convergence result is semi-global since the initial states of agents are needed for the design. Additionally, the waiting time in \cite{Ren17,Chu22,Wang22} implies a two-stage algorithm implementation, requiring knowledge of the initial states of agents, the communication graph, or other prior global parameters of the cost functions to determine the size of it before running the algorithm.
 
Motivated by this, we start with a basic DTQO and develop a distributed algorithm that seamlessly integrates an average estimator with an adaptive optimizer, based on the nonsingular matrices produced by the Dead Zone Algorithm (DZA). We then extend this result to address the distributed optimization problem of a general class of cost functions, not limited to quadratic cost functions\cite{Sun17,Wang22,Santilli24}. The contributions lie in: ({\romannumeral1}) We propose distributed optimization algorithms without relying on positive definite, identical, diagonal or time-invariant Hessians of all local cost functions. To the best of our knowledge, these are the first result of distributed optimization for the aforementioned types of cost functions. ({\romannumeral2}) By constructing a composite Lyapunov function and introducing the concept of the finite escape time, we establish the asymptotic convergence of closed-loop outputs to the optimal solution in the global sense, in contrast with the semi-global result in \cite{Ren17}. ({\romannumeral3}) Different from \cite{Ren17,Chu22,Wang22}, the proposed design does not require the waiting time during algorithm implementation. Such design also relaxes the assumption that the upper bounds of certain time-varying parameters or their high-order time derivatives are {\em known} in advance \cite{Ren17,Chu22,Sun17}. 

{\bf Notations.} The set of real numbers is represented by $\mathbb{R}$. Let $1_N$ denote the vector $(1,\cdots,1)^\top \in\mathbb{R}^N$ and $I_m$ denote the $m$-dimensional identity matrix. For a matrix $A\in\mathbb{R}^{m\times m}$, we use $\lambda_p(A)$, $p\in\{1,\cdots,m\}$, to represent the $p$-th smallest eigenvalue of $A$. For a vector or a matrix $X$, we use $|X|_p$ to represent the $l_p$-norm of $X$. $|X|$ is the abbreviation of $|X|_2$. The signum function is denoted by $\sgn(y)$, with $y\in\mathbb{R}$. For a vector $x={(x_1,\cdots,x_n)^\top}$, define $\sgn(x)={(\sgn(x_1),\cdots,\sgn(x_n))^\top}$ and $S(x)={(S(x_1),\cdots,S(x_n))^\top}$, where $S(x_i)=x_i/(|x_i|+\epsilon_1\eta_t)$ with $\epsilon_1>0$ and $\eta_t\in\mathbb{R}$. For a matrix $A=[a_{ij}]\in\mathbb{R}^{m\times n}$, define $\sgn(A)=[\sgn(a_{ij})]\in\mathbb{R}^{m\times n}$ and use $A_{\otimes}$ to denote $A\otimes I_m$. Denote $\diag\{A_i\}$, $i=1,\cdots,N$ as a diagonal matrix with diagonal entries composed of $A_1,\cdots,A_N$, where $A_i\in\mathbb{R}^{m_i\times m_i}$. The gradient of a function $f(x,t):\mathbb{R}^m\times\mathbb{R}_+\rightarrow\mathbb{R}$ with respect to $x$ is denoted by $\nabla f(x,t)$. Let $\nabla_t f(x,t)$ denote the partial derivative of ${\nabla} f(x,t)$ with respect to time $t$.

\section{Preliminaries and Problem Statement}
\label{Problem Formulation}
\subsection{Preliminaries}
A differentiable function $f(x,t):\mathbb{R}^m\times\mathbb{R}_+\rightarrow\mathbb{R}$ is uniformly $\zeta$-strongly convex in $x$ if there exists $\zeta>0$ such that $(a-b)^{\top}\left(\nabla f(a,t)-\nabla f(b,t)\right)\geq \zeta |a-b|^2$, $\forall a,b\in \mathbb{R}^m$, $\forall t\geq0$. $\nabla f(x,t)$ is uniformly $\theta$-Lipschitz in $x$ if there exists $\theta>0$ such that $|\nabla f(a,t)-\nabla f(b,t)|\leq \theta|a-b|$, $\forall a,b\in \mathbb{R}^m$, $\forall t\geq0$. 

The communication network is described by an undirected graph $\mathcal{G}=(\mathcal{N},\mathcal{E})$, where $\mathcal{N}=\left\{1,\dots,N\right\}$ is the node set and $\mathcal{E}\subseteq \mathcal{N}\times \mathcal{N}$ is the edge set. Denote the weighted adjacency matrix of $\mathcal{G}$ as $\mathcal{A}=[a_{ij}]_{N\times N}$. An edge $(i,j)\in \mathcal{E}$ means that nodes $i,j$ can receive information from each other. Define $a_{ij}=1$ if $(j,i)\in \mathcal{E}$, and $a_{ij}=0$ otherwise. Let $\mathcal{N}_i=\{j\in\mathcal{N}:(j,i)\in\mathcal{E}\}$ denote the set of node $i$'s neighbors. The Laplacian of a graph $\mathcal{G}$ is defined as $L=[l_{ij}]_{N\times N}$ with $l_{ii}=\sum_{j\neq i}a_{ij}$ and $l_{ij}=-a_{ij}$ for $j\neq i$. If there is a path from any node to any other node in graph $\mathcal{G}$, it is called connected. Denote the incidence matrix associated with $\mathcal{G} $ as $D=[d_{ik}]_{N\times |\mathcal{E}|}$, where $d_{ik}=1$ if the edge $e_k$ enters the node $i$, $d_{ik}=-1$ if the edge $e_k$ leaves the node $i$, and $d_{ik}=0$ otherwise.

\begin{assumption}\label{assumption.graph1}
(Graph connectivity) The graph $\mathcal{G}$ is undirected and connected.
\end{assumption}

By Assumption \ref{assumption.graph1}, it is known that the Laplacian matrix $L$ of the graph possesses a zero eigenvalue, corresponding to the eigenvector $1_N$, while the remaining eigenvalues are positive \cite{Algebraic}, which satisfy $\lambda_1(L) = 0 < \lambda_2(L) \leq\cdots \leq \lambda_N(L)$.

\subsection{Problem Definition}
Consider a group of $N$ agents interacting over a communication network $\mathcal{G}=(\mathcal{N},\mathcal{E})$. Each agent $i\in\mathcal{N}$ is assigned a local cost function defined by 
\begin{align}\label{eq.fi}
    f_i(x_i,t)=\frac{1}{2}x_i^{\top}H_i(t)x_i+R_i^{\top}(t)x_i+d_i(t),
\end{align}
where $x_i\in\mathbb{R}^m$ is the decision variable, and $H_i(t)\in\mathbb{R}^{m\times m}$, $R_i(t)\in\mathbb{R}^m$ and $d_i(t)\in\mathbb{R}$ are time-varying parameters. We aim to design a distributed algorithm such that each agent solves the optimization problem
\begin{align}\label{optimizationproblem}
&\min_{r\in\mathbb{R}^{m}}f(r,t)=\sum_ {i\in \mathcal{N}}{f_i(r,t)}.
\end{align}
The optimal solution of the above optimization problem is defined as $r^*(t)=\mathop{\arg\min}_{r\in\mathbb{R}^m}f(r,t)$. To guarantee the first-order optimality condition $\nabla{f}(r^*(t),t)=0$ holds, $\forall t\geq 0$, there is $(d/dt){\nabla} f(r^*(t),t)=H(t)\dot{r}^*(t)+\nabla_t f(r^*(t),t)\equiv0$, where $H(t)=\sum_{i\in \mathcal{N}}H_i(t)$. Then, if $H(t)$ is invertible, it follows that $\dot{r^*}(t)\equiv-H^{-1}(t)\nabla_t f(r^*(t),t)$. The following assumption guarantees the uniqueness of $r^*(t)$ and the positive definiteness of the Hessian $H(t)$, $\forall t\geq0$.

\begin{assumption}\label{assumption.lipschitz.stronglyconvex}
(Uniform strong convexity) The global cost function $f(r,t)$ is uniformly $h_1$-strongly convex with respect to $r$. That is, there exists a known $h_1>0$ such that ${\lambda_1}(H(t))\geq h_1$, $\forall t\geq0$.
\end{assumption}

Compared with most existing literature \cite{Sun17,Chu22,Ding22,Wang22,Huang20,Sun23,Santilli24,Zheng24}, each $f_i(x_i,t)$ here is not required to be uniformly strongly convex. It is even allowed to be {\em non-convex}. Moreover, the Hessians of all $f_i(x_i,t)$, $i\in\mathcal{N}$, do not have to be diagonal \cite{Wang22} or time-invariant \cite{Ding22}, nor do they have to be completely identical as required in \cite{Ren17} and \cite{Huang20}.

\begin{remark}
If the local cost functions are all convex, and there exists a strongly convex one with a known strongly convex coefficient (or a more conservative lower bound) $\underline{h}$, then, $h_1$ can directly take $\underline{h}$ (see \cite[Corollary 4.3.15]{matrix} for details).
\end{remark}

\begin{assumption}\label{assumption.boundedness.hessian}
(Boundedness) For any $i\in\mathcal{N}$, $|{H}_i(t)|$, $|{R}_i(t)|$ and $|\dot{R}_i(t)|$ are bounded, $\forall t\geq0$, but their upper bounds are unknown. Moreover, there exists a known ${h}_2>0$ such that $|\dot{H}_i(t)|_\infty\leq{h}_2$, $\forall t\geq0$.
\end{assumption}
We stress that Assumption \ref{assumption.boundedness.hessian} guarantees the linear boundedness of $\nabla_t f_i(x_i,t)$ with respect to $x_i$, which is more relaxed than the assumptions that directly impose {\em known} and {\em constant} bounds on $\nabla_t f_i(x_i,t)$ in \cite{Ren17} and $(d/dt)\nabla_t f_i(x_i,t)$ in \cite{Chu22}. By the derivations below \eqref{optimizationproblem}, Assumptions \ref{assumption.lipschitz.stronglyconvex} and \ref{assumption.boundedness.hessian} guarantee the boundedness of $r^*(t)$ and thus $\dot{r}^*(t)$. Moreover, we only require knowledge of the lower bound of the Hessian's eigenvalue and the upper bound of Hessian's changing rate. Additional prior knowledge of cost functions, such as the upper bounds of $(\partial/\partial t)\nabla_t f_i(x_i,t)$ in \cite{Ren17,Chu22}, $\dddot{r}^*(t)$ in \cite{Sun17}, $\ddot{R}_i(t)$ in \cite{Ren17,Chu22} and $\nabla f_i(x_i,t)$ in \cite{Sun23}, is not necessary here.

\begin{remark}\label{remarkaa}
In practice, the parameters of cost functions are typically constrained by the properties of physical objects. For example, consider the problem of multi-UAV tracking of a target\cite{Zhou22,Shorinwa24} with the cost function $f_i=|x_i-r(t)|^2$, where $r(t)$ is the trajectory of the target vehicle. The limited energy and engine power of the vehicle, in fact, must constrain $|r(t)|$ and $|\dot{r}(t)|$ within a certain range, respectively, but the specific upper bounds of $|r(t)|$ and $|\dot{r}(t)|$ may be unknown in advance. Assumption \ref{assumption.boundedness.hessian} aligns with these observations. 
\end{remark}

\section{Design Based on Estimator and Optimizer}\label{framework}

Revisiting the centralized optimization algorithm in \cite{su2009traffic}, we can design $\dot{r}=-\nabla f(r,t)-H^{-1}(t)\nabla_t f(r,t)$ to guide $r(t)$ to asymptotically converge to the optimal $r^*(t)$. Inspired by this, the following design adopts distributed `$\nabla f_i$' feedback and `$NH^{-1}(t)\nabla_t f_i$' feedforward\footnote{The scaling factor $N$ is instrumental for the subsequent derivations.}. In view of the heterogeneity of $H_i(t)$, $i\in\mathcal{N}$, the technique of dynamic average consensus is employed to estimate the average of all Hessians, i.e., $H(t)/N$.

\subsection{Average Estimator and Output Reconstruction}
\label{DAE}
For agent $i\in\mathcal{N}$, we first consider the average estimator
\begin{align}\label{eq.z}
    \dot{\xi}_i=&-\omega\sum_{j\in\mathcal{N}_i}\sgn(z_i-z_j),~~~~z_i=\xi_i+H_i(t),
\end{align}
where $\omega>{h}_2$, $\xi_i\in \mathbb{R}^{m\times m}$, and $z_i\in \mathbb{R}^{m\times m}$ is the output of estimator. The initial conditions satisfy $\sum_{j\in\mathcal{N}}\xi_i(0)=0$ and $\xi_i(0)$ is symmetric. The convergence of $z_i$ to $\bar{H}(t):=H(t)/N$ in \emph{finite time} can be guaranteed by \cite{fei12}. However, since the $H_i(t)$ may not be positive definite, $z_i(t)$ inevitably passes through singular points during its evolution. Considering $z^{-1}_i(t)$ as a part of feedforward term, we propose Algorithm \ref{alg} to avoid the impact of singular $z_i(t)$, where $h_0>0$ is to be determined and $\mathcal{M}=\{1,\cdots,m\}$. In Algorithm \ref{alg}, we use ${\min}_{p\in\mathcal{M}}|\lambda_p(z_i(t))|$ to quantify how close $z_i(t)$ approaches singular points, and constrain $\hat{z}_i(t)$ to be within a `safe range' away from singular points. Consequently, ${\min}_{p\in\mathcal{M}}|\lambda_p(\hat{z}_i(t))|\geq h_0$ holds for all $t\geq0$, regardless of whether the input matrix $z_i(t)$ is singular. We thus refer to Algorithm \ref{alg} as the {\bf D}ead {\bf Z}one {\bf A}lgorithm, and the inverse of the output of the DZA, ${\hat{z}_i}^{-1}(t)$, remains well-defined during the evolution of $z_i(t)$. 

\begin{algorithm}[!h]
    \caption{Dead Zone Algorithm of $z_i(t)$} \label{alg} 
    \textbf{Input:} $z_i(t)$~~~\textbf{Output:} $\hat{z}_i(t)$
    \begin{algorithmic}[1]
    \State Initialize $H_0\leftarrow h_0 I_m$
    \While{$t\geq0$}
        \State $\lambda_0\leftarrow\mathop{\min}_{p\in\mathcal{M}}|\lambda_p(z_i(t))|$
        \If {$t=0$} 
            \If {$\lambda_0<h_0$}
                \State $\hat{z}_i(t) \leftarrow H_0$
            \Else
                \State $\hat{z}_i(t) \leftarrow z_i(t)$
            \EndIf
        \Else
            \If {$\lambda_0<h_0$}
                \State $\hat{z}_i(t) \leftarrow H_0$
            \Else
                \State $\hat{z}_i(t) \leftarrow z_i(t)$,~~~$H_0\leftarrow z_i(t)$
            \EndIf
        \EndIf
    \EndWhile
    \end{algorithmic}
    \end{algorithm}

From the perspective of ensuring nonsingularity, the DZA plays the same role as the projection method in \cite{Ren17}. However, since the integrated design shown in Fig. \ref{fig.control} is adopted in our work, we also need to provide a sufficient condition for parameter selection in the DZA to ensure closed-loop convergence, rather than just ensuring nonsingularity. The following lemma illustrates an important relation between $z_i(t)$ and the `shaped' $\hat{z}_i(t)$, if we select a suitable parameter $h_0$. This relation is instrumental for the analysis in Section \ref{ConvergenceAnalysis}.
\begin{lemma}\label{condition}
    Suppose Assumptions \ref{assumption.graph1}-\ref{assumption.boundedness.hessian} hold. For system \eqref{eq.z} and Algorithm \ref{alg} with $h_0=\gamma h_1/N$, there exist a constant ${k}_1>0$ such that $|\hat{z}_i(t)-\bar{H}(t)|\leq{k}_1|z_i(t)-\bar{H}(t)|$ holds $\forall t\geq 0$.
\end{lemma}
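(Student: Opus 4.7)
The plan is to split on the two branches of the DZA. Whenever the algorithm returns $\hat z_i(t)=z_i(t)$ the inequality holds trivially with any $k_1\geq 1$, so the content of the lemma reduces to the branch $\hat z_i(t)=H_0$, which is active precisely on $\{t\,:\,\min_{p\in\mathcal{M}}|\lambda_p(z_i(t))|<h_0\}$. I will tacitly use that the design parameter lies in $\gamma\in(0,1)$, since otherwise the dead zone would engulf $\bar H(t)$ itself and no useful inequality could hold.

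The first substantive step is a uniform lower bound on $|z_i(t)-\bar H(t)|$ in this active branch. Both matrices are symmetric: $H_i(t)$ is symmetric as a Hessian, $\xi_i(0)$ is symmetric by construction, and since $\sgn(\cdot)$ acts entrywise the dynamics \eqref{eq.z} preserve symmetry, so $z_i(t)$ remains symmetric. Picking an index $p$ with $|\lambda_p(z_i(t))|<\gamma h_1/N$ and using $\lambda_p(\bar H(t))\geq\lambda_1(\bar H(t))\geq h_1/N$ from Assumption~\ref{assumption.lipschitz.stronglyconvex}, Weyl's eigenvalue inequality gives
\[
|z_i(t)-\bar H(t)|\;\geq\;\lambda_p(\bar H(t))-\lambda_p(z_i(t))\;\geq\;(1-\gamma)\,h_1/N.
\]

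The main obstacle is showing that $|\hat z_i(t)-\bar H(t)|=|H_0-\bar H(t)|$ is bounded by a constant independent of $t$. For this I would invoke the finite-time convergence result of \cite{fei12}: with $\omega>h_2$ there is some $T^*<\infty$ beyond which $z_i(t)\equiv\bar H(t)$, and thereafter $\min_p|\lambda_p(z_i(t))|\geq h_1/N>h_0$, so the DZA stays in its pass-through branch and $H_0$ is never overwritten for $t>T^*$. On the compact interval $[0,T^*]$, $|\dot\xi_i|_\infty\leq\omega N$, so $|\xi_i(t)|\leq|\xi_i(0)|+\omega N T^*$, and Assumption~\ref{assumption.boundedness.hessian} gives $|z_i(t)|\leq B$ for an explicit constant $B$. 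Since $H_0$ is either the initial $h_0 I_m$ or a past value of $z_i$, I obtain $|H_0|\leq\max\{h_0\sqrt{m},B\}$; combining this with the uniform boundedness of $|\bar H(t)|$ produces a constant $M$ with $|\hat z_i(t)-\bar H(t)|\leq M$ whenever the active branch is engaged. The lemma then follows with $k_1=\max\{1,\,MN/((1-\gamma)h_1)\}$. The delicate point is exactly this third step: the finite-time character of the estimator is what converts a potentially time-growing safeguard into a uniform constant, and without it the lemma would fail.
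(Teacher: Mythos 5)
Your proof is correct and follows essentially the same route as the paper's: split on the two DZA branches, use the preserved symmetry of $z_i(t)$ together with the symmetric eigenvalue-perturbation inequality to get $|z_i(t)-\bar H(t)|\geq(1-\gamma)h_1/N$ in the dead-zone branch, and conclude by bounding the ratio; the only difference is that you make explicit the uniform boundedness of $|\hat z_i(t)-\bar H(t)|$ (via finite-time convergence of the estimator and boundedness of $z_i$ on the compact interval), which the paper leaves implicit when it simply asserts the existence of $k^s_i$. One small inaccuracy that does not affect the argument: after $T^*$ the pass-through branch still overwrites $H_0$ (with $z_i=\bar H$), so $H_0$ is not ``never overwritten''---what matters is only that the dead-zone branch never re-engages there.
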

\begin{proof}
See Appendix \ref{appendix.proof2}.
\end{proof}

\begin{remark}
Different from \cite{Ren17,Chu22,Wang22,Sun23}, although the finite-time average estimator is adopted here, the size of convergence time (i.e., the waiting time in \cite{Ren17,Chu22,Wang22,Sun23}) is unnecessary for agents to obtain, because we do not employ the execution logic of a two-stage algorithm implementation.
\end{remark}

\subsection{Adaptive Optimizer}\label{parta}
For agent $i\in\mathcal{N}$, consider the following adaptive optimizer
\begin{align}
\dot{x}_i=&\underbrace{{\mystrut{2.5ex}}\smash {-\sum_{j\in\mathcal{N}_i}(\alpha_{ij}(x_i-x_j)+\beta_{ij}S(x_i-x_j))}}_{\text{\small Adaptive consensus coordinator}}\hspace{1.2mm}+\phi_i,\label{eq.ui}\\
\phi_i=&-k\nabla f_i(x_i,t)-\hat{z}_i^{-1}(t)\nabla_t f_i(x_i,t)\label{eq.phi},\\
\dot{\alpha}_{ij}=&|x_i-x_j|^2,~~~\dot{\beta}_{ij}=|x_i-x_j|_1-m\epsilon_1\eta_t,~~j\in\mathcal{N}_i,\label{eq.alpha}
\end{align}
where the function $S(\cdot)$ is defined in {\bf Notations}, $\eta_t=\exp(-\epsilon_2t)$ with $\epsilon_2>0$, $k>0$ is to be determined, $\alpha_{ij}(0)=\alpha_{ji}(0)\geq0$ and $\beta_{ij}(0)=\beta_{ji}(0)\geq m\epsilon_1/\epsilon_2$ with $\epsilon_1>0$. As a continuous approximation of the signum function, the function $S(\cdot)$ reduces the chattering of $x_i$ in application.

In \eqref{eq.ui}, $\phi_i$ incorporates distributed `$\nabla f_i$' feedback and `$\hat{z}_i^{-1}(t)\nabla_t f_i$' feedforward. The consensus coordinator utilizes adaptive control gains $\alpha_{ij}$ and $\beta_{ij}$ to handle unknown global parameter (see details in Appendix \ref{appendix.proof3}). Fig. \ref{fig.control} illustrates the structure of the proposed design for agent $i$, where dashed arrows represent information received from neighboring agents. Note that if $x_i$ denotes the position of agent $i$, and $x_i-x_j$ can be obtained through relative measurements (e.g. the acoustic and vision sensors), only the variable $z_j$ is transmitted through the communication channel. Therefore, this design reduces the communication resources usage compared to \cite{Ren17,Chu22,Wang22}.
\begin{figure}[!h]
\centerline{\includegraphics[width=0.9\columnwidth]
    {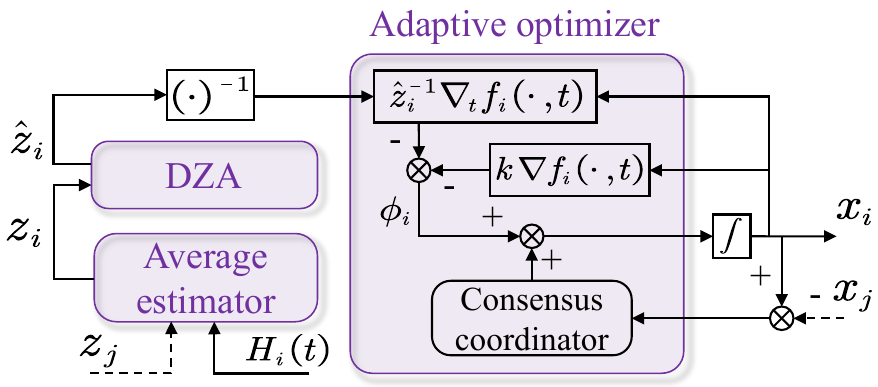}}
    \caption{The structure of the proposed design.}
    \label{fig.control}
\end{figure}

The DZA bridges the average estimator and the adaptive optimizer, enabling implementation without relying on the waiting time (required in \cite{Ren17,Chu22,Wang22,Sun23}). This integration arises a new convergence analysis issue. The dead-zone nonlinearity introduced by the DZA and the inverse operation of its output, $(\cdot)^{-1}$, also cause difficulties in subsequent analysis.

\begin{remark}\label{nonsmooth}
The trajectory of $\hat{z}_i(t)$ generated by Algorithm \ref{alg} is discontinuous but measurable and locally essentially bounded. Thus, the Filippov solutions of the proposed design always exist. Since the Lyapunov functions used in convergence analysis are continuously differentiable, the proof still holds without using the nonsmooth analysis\cite{Cortes08,Sun23}.
\end{remark}

\section{Convergence Analysis}\label{ConvergenceAnalysis}

Define $x=(x_1,\cdots,x_N)^\top$, $\phi=(\phi_1^\top,\cdots,\phi_N^\top)^\top$ and a diagonal matrix $B$, whose order is equal to the number of edges in the graph $\mathcal{G}$. The nodes corresponding to non-zero elements in column $i$ of matrix $D$ are denoted as $i_1$ and $i_2$, respectively. The diagonal element $B_{ii}$ of $B$ is defined as $\beta_{i_1i_2}$. Then, we rewrite \eqref{eq.ui} as
\begin{align}\label{eq.compact}
    \dot{x}&=-\bar{L}_{\otimes}x-(DB)_{\otimes}S(D^{\top}_{\otimes}x)+\phi,
\end{align}
where $\bar{L}=[l^{\alpha}_{ij}(t)]\in\mathbb{R}^{N\times N}$ with $l^{\alpha}_{ii}(t)=\sum_{j\neq i}\alpha_{ij}(t)$ and $l^{\alpha}_{ij}(t)=-\alpha_{ij}(t)$ for $j\neq i$. Define the consensus error $e=M_{\otimes} x$ with $M=I_N-1_N1_N^{\top}/N$, and the error between the average state and the global optimal solution $\tilde{x}=\textbf{1}^{\top}_Nx/N-r^*$ with $\textbf{1}^{\top}_N=1_N^{\top}\otimes I_m$. This implies the following relation between $x$, $e$ and $\tilde{x}$, i.e.,
\begin{align}\label{eq.xetr}
    x=e+1_N\otimes\tilde{x}+1_N\otimes r^*.
\end{align}  
With $\nabla F(x,t)=(\nabla f_1(x_1,t),\cdots,\nabla f_N(x_N,t))^\top$, we obtain 
\begin{align}
\dot{e}=&-\bar{L}_{\otimes}e-(DB)_{\otimes}S(D^{\top}_{\otimes}e)+M_{\otimes}\phi,\label{eq.e}\\
\dot{\tilde{x}}=&-\frac{k}{N}\textbf{1}^{\top}_N\nabla F(x,t)-\frac{1}{N}\sum_{i\in\mathcal{N}}\hat{z}^{-1}_i\nabla_t f_i(x_i,t)-\dot{r}^*,\label{eq.tildex}
\end{align}
where $\dot{r}^*=-{H}^{-1}(t)\sum_{i\in\mathcal{N}}\nabla_t f_i(r^*,t)$. 

The following theorem gives the main result of this note.
\begin{theorem}\label{distributed.oneorder}
Suppose Assumptions \ref{assumption.graph1}-\ref{assumption.boundedness.hessian} hold. For system \eqref{eq.z}-\eqref{eq.alpha} with Algorithm \ref{alg}, let $h_0=\gamma h_1/N$ and $k>h_2N\sqrt{m}/h_1^2+\epsilon_3$ with $0<\gamma<1$ and $\epsilon_3>0$. Then, the optimization problem (\ref{optimizationproblem}) is solved, i.e., $\lim_{t\rightarrow \infty}x_i(t)-r^*(t)=0$, $i\in\mathcal{N}$.
\end{theorem}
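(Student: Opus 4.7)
The plan is to work in the error coordinates $e=M_\otimes x$ and $\tilde x=\mathbf{1}_N^\top x/N-r^\ast$ introduced in \eqref{eq.xetr}, build a composite Lyapunov function that simultaneously penalises consensus error, optimisation error and adaptive-gain mismatch, and conclude via Barbalat's lemma after separately ruling out finite escape over the transient interval before the estimator has converged.

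First I would establish forward completeness on $[0,\infty)$. The estimator dynamics \eqref{eq.z} have entrywise bounded drift, so $z_i(t)$ grows at most linearly; Algorithm~\ref{alg} keeps the eigenvalues of $\hat z_i$ bounded away from zero by $h_0$, so $\hat z_i^{-1}$ is globally well-defined and norm-bounded; and the optimizer right-hand side in \eqref{eq.ui}--\eqref{eq.alpha} is affine in $x_i$ with coefficients bounded on every compact interval by Assumption~\ref{assumption.boundedness.hessian}, which rules out finite escape by ODE comparison. The cited finite-time consensus result then yields a $T^\ast<\infty$ with $z_i(t)\equiv \bar H(t)$ and, via Lemma~\ref{condition}, $\hat z_i(t)\equiv \bar H(t)=H(t)/N$ for all $t\geq T^\ast$.

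The heart of the proof is a Lyapunov analysis for $t\geq T^\ast$ using
\[
V=\tfrac12|e|^2+\tfrac12|\tilde x|^2+\tfrac12\!\!\sum_{(i,j)\in\mathcal E}\!(\alpha_{ij}-\alpha^\ast)^2+\tfrac12\!\!\sum_{(i,j)\in\mathcal E}\!(\beta_{ij}-\beta^\ast)^2
\]
with constants $\alpha^\ast,\beta^\ast>0$ fixed only for the analysis. Differentiating along \eqref{eq.e}--\eqref{eq.tildex} and \eqref{eq.alpha}, the decomposition $\alpha_{ij}=\alpha^\ast+\tilde\alpha_{ij}$ splits $\bar L$ as $\alpha^\ast L+\tilde L$; the cross terms $-e^\top\tilde L_\otimes e$ and $\sum\tilde\alpha_{ij}|x_i-x_j|^2$ cancel exactly since $e_i-e_j=x_i-x_j$, leaving $-\alpha^\ast e^\top L_\otimes e\leq-\alpha^\ast\lambda_2(L)|e|^2$. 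An analogous cancellation using the saturation inequality $y^2/(|y|+a)\geq|y|-a$ applied to $e^\top(DB)_\otimes S(D_\otimes^\top e)$ together with $\dot\beta_{ij}=|x_i-x_j|_1-m\epsilon_1\eta_t$ leaves $-\beta^\ast\sum_{(i,j)}|x_i-x_j|_1$ plus an $O(\eta_t)$ residual that is integrable because $\int_0^\infty e^{-\epsilon_2 t}\,dt<\infty$. For the $\tilde x$-direction, the quadratic form of each $f_i$ gives $\sum\nabla f_i(x_i,t)=\nabla f(\tilde x+r^\ast,t)+\sum H_i e_i$, so $h_1$-strong convexity combined with $\nabla f(r^\ast,t)=0$ yields $-\tfrac{k h_1}{N}|\tilde x|^2$ along with an $|e||\tilde x|$ cross term. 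For $t\geq T^\ast$ the feedforward mismatch collapses through $\hat z_i=\bar H$ and $\nabla_t f_i(x_i,t)-\nabla_t f_i(r^\ast,t)=\dot H_i(e_i+\tilde x)$ into $-H^{-1}\sum\dot H_i(e_i+\tilde x)$, whose $\tilde x$-diagonal part is dominated by $(h_2 N\sqrt m/h_1)|\tilde x|^2$ using Assumption~\ref{assumption.boundedness.hessian} on $|\dot H|$ together with $|H^{-1}|\leq 1/h_1$. The hypothesis $k>h_2 N\sqrt m/h_1^2+\epsilon_3$ precisely forces the net coefficient on $|\tilde x|^2$ to be at most $-\epsilon_3 h_1/N<0$, while all remaining $|e||\tilde x|$ and $|e|$-linear cross terms are absorbed by the $-\alpha^\ast\lambda_2(L)|e|^2$ and $-\beta^\ast\sum|x_i-x_j|_1$ reservoirs via Young's inequality after enlarging $\alpha^\ast,\beta^\ast$. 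This yields $\dot V\leq-c_1|e|^2-c_2|\tilde x|^2+c_3\eta_t$ for some $c_1,c_2,c_3>0$.

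Step four is the conclusion: integrability of $\eta_t$ and boundedness of $V$ give $\int_{T^\ast}^\infty(|e|^2+|\tilde x|^2)\,dt<\infty$, and together with the boundedness of $\dot e,\dot{\tilde x}$ that follows from the signal bounds established earlier, Barbalat's lemma delivers $|e(t)|,|\tilde x(t)|\to 0$, so by \eqref{eq.xetr} $x_i(t)\to r^\ast(t)$ for every $i\in\mathcal N$. The main obstacle I anticipate is the joint handling of the pre-convergence window $[0,T^\ast]$ and the free parameters $\alpha^\ast,\beta^\ast$: one must show that the monotone $\alpha_{ij}$ remain finite at $T^\ast$ (which needs the forward-completeness step together with bounded $\phi_i$ on $[0,T^\ast]$), and that the post-convergence Lyapunov inequality can be verified with $\alpha^\ast,\beta^\ast$ chosen only a posteriori. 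This is precisely why $V$ uses $(\alpha_{ij}-\alpha^\ast)^2$ rather than $\alpha_{ij}$ itself: $\alpha^\ast,\beta^\ast$ need not be known to the algorithm, only to the analyst, which is what turns a semi-global argument into the claimed global asymptotic convergence.
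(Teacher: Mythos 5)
Your architecture is essentially the paper's: the same error coordinates $e=M_\otimes x$ and $\tilde x$, the same composite Lyapunov function with $(\alpha_{ij}-\bar\alpha)^2$ and $(\beta_{ij}-\bar\beta)^2$ mismatch terms and the exact cancellations against $\dot\alpha_{ij},\dot\beta_{ij}$, the finite-time estimator convergence combined with the DZA relation of Lemma \ref{condition}, a no-finite-escape argument for the transient, and Barbalat's lemma at the end. The only organizational difference is that you establish the dissipation inequality only on $[T^\ast,\infty)$, where $\hat z_i\equiv\bar H$ kills the feedforward mismatch, whereas the paper writes one inequality on all of $[0,\infty)$, carries the estimator-induced perturbation $W_3$ (bounded via Lemma \ref{condition} in terms of $\delta$), and shows $\int_0^\infty W_3\,dt<\infty$; since $T^\ast$ is used only by the analyst, this difference is immaterial. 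Your use of the exact quadratic decomposition $\sum_i\nabla f_i(x_i,t)=\nabla f(\tilde x+r^\ast,t)+\sum_i H_ie_i$ in place of the paper's Lipschitz bound is likewise equivalent here.

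The genuine gap is the forward-completeness step. You assert that the right-hand side of \eqref{eq.ui}--\eqref{eq.alpha} is ``affine in $x_i$ with coefficients bounded on every compact interval by Assumption \ref{assumption.boundedness.hessian}.'' That is false as a justification: the gains $\alpha_{ij}(t)=\alpha_{ij}(0)+\int_0^t|x_i-x_j|^2\,ds$ and $\beta_{ij}(t)$ of the consensus coordinator are state-dependent integrals, and they are bounded on $[0,\bar t]$ only if $x$ is already known to be bounded there --- exactly what you are trying to prove, so the argument is circular. This is precisely what the paper's Lemma \ref{xbounded} is written to handle: it separates the bounded-coefficient part $g_i(x_{ip},t)$ from the adaptive terms $W_1,W_2$ and shows by a sign argument at the maximal component that $W_1+W_2$ cannot drive that component to infinity. (A short repair also works: since $\alpha_{ij},\beta_{ij}\ge0$, one has $x^\top\bar L_\otimes x\ge0$ and $(D_\otimes^\top x)^\top B_\otimes S(D_\otimes^\top x)\ge0$, hence $\tfrac{d}{dt}\tfrac12|x|^2\le\phi_1|x|^2+\phi_2|x|$, ruling out finite escape.) Without some such argument, your claims that $\alpha_{ij}(T^\ast)$, $\beta_{ij}(T^\ast)$ and hence $V(T^\ast)$ are finite --- on which your entire $t\ge T^\ast$ analysis rests --- are unsupported. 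One further point to check, shared with the paper's own bookkeeping rather than a deviation from it: the dominant $\Lambda_1$ contribution is $(h_2N\sqrt m/h_1)|\tilde x|^2$, so the claim that $k>h_2N\sqrt m/h_1^2+\epsilon_3$ ``precisely'' renders the net $|\tilde x|^2$ coefficient negative deserves a careful constant count rather than being asserted.
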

\begin{proof} 
See Appendix \ref{appendix.proof3}.
\end{proof}

In this note, the analyses of error subsystems are unified within the common time domain, which differs from that of the waiting time-based design \cite{Ren17,Wang22,Chu22,Sun23}. Specifically, we establish the Lyapunov analyses of the $e$ and $\tilde{x}$ subsystems based on \eqref{eq.xetr}-\eqref{eq.tildex} and Lemma \ref{condition}. With the existing analysis of the $\delta$ subsystem, we consider the proposed algorithm as an interconnected system, as shown in Fig. \ref{fig.subsystem}. By further employing the finite escape-time analysis (Lemma \ref{xbounded} in Appendix \ref{lemmas}), we prove the global asymptotic
convergence to the optimal solution $r^*(t)$.
\begin{figure}[!h]
    \centerline{\includegraphics[width=0.75\columnwidth]
    {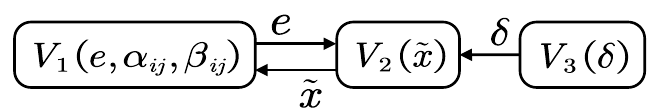}}
    \caption{The interconnections within the proposed algorithm.}
    \label{fig.subsystem}
\end{figure}

%%%%%%%%%%%%%%%%%%%%%%%%%%%%%%%%%%%%%%%
\section{Extensions: Non-Quadratic Cost Function}

In this section, we consider the local cost function $f_i(x_i,t)$ of a general form as
\begin{align}\label{eq.ffi}
f_i(x_i,t)=\frac{1}{2}x_i^{\top}H_i(t)x_i+R_i^{\top}(t)x_i+\hat{f}_i(x_i,t),~i\in\mathcal{N},
\end{align} 
where $\hat{f}_i(x_i,t)$ is not limited to the quadratic form. Let $\rho_i(x_i,t)$ denote the Hessian of $f_i(x_i,t)$. We then make the following assumption on $f(r,t)$ and $\hat{f}_i(x_i,t)$.
\begin{assumption}\label{assumption.newboundedness}
For $\forall t\geq0$ and $\forall i\in\mathcal{N}$, there exist $\mu_j>0$, $j=1,\cdots,5$ such that 
\begin{enumerate}[(1)]
    \item \label{a} The global cost function $f(r,t)$ is uniformly $\mu_1$-strongly convex and $\nabla \hat{f}_i(x_i,t)$ is uniformly globally Lipschitz, 
    \item \label{b} $|\nabla\hat{f}_i(x_i,t)|\leq\mu_2|x_i|+\mu_3$,
    \item \label{c}$\nabla_t\hat{f}_i(x_i,t)=g_i(t)x_i+s_i(x_i,t)$ where $|g_i(t)|\leq\mu_4$, $|s_i(x_i,t)|\leq\mu_5\bar{s}(t)<\infty$ and $\int_{0}^{\infty}\bar{s}^2(t) \,dt<\infty$.
\end{enumerate}
\end{assumption}

\begin{remark}
The requirement of uniform strong convexity is still only applied to $f(r,t)$ here. The uniform global Lipschitz property of the gradient function is also adopted in \cite{Chu22,ZOU2020,kia2015}. Assumption \ref{assumption.newboundedness}.(\ref{b}) and \ref{assumption.newboundedness}.(\ref{c}) respectively require $\nabla\hat{f}_i(x_i,t)$ and $\nabla_t\hat{f}_i(x_i,t)$ to be linearly bounded. Similar assumptions are also used in \cite{Ren17,Huang20,Zheng24}. In fact, Assumption \ref{assumption.newboundedness} holds for many cost functions, as shown in Example 1 of Section \ref{section.simulation}.
\end{remark}
The algorithm for solving the optimization problem \eqref{optimizationproblem} with $f_i(x_i,t)$ defined by \eqref{eq.ffi} follows the design process detailed in Section \ref{framework}. However, the estimator \eqref{eq.z} cannot be applied here to estimate $\sum_{i\in \mathcal{N}}\rho_i(x_i,t)/N$ with $\rho_i(x_i,t)$ replacing $H_i(t)$. The reason is that $\dot{\rho}_i(x_i,t)$ may contain $\dot{x}_i$, whose upper bound is unknown before implementing algorithm. Consequently, the fixed gain $\omega$ cannot be determined in advance due to the unknown upper bound of $|\dot{\rho}_i(x_i,t)|_\infty$. To address this issue, we introduce a new average estimator adapted from \cite{Jiang24}: 
\begin{align}\label{eq.newestimator}
    \dot{\xi}_i=&-\sum_{j\in\mathcal{N}_i}\omega_{ij}(t)\sgn(z_i-z_j),~~z_i=\xi_i+\rho_i,~~i\in\mathcal{N},
\end{align}
where $\omega_{ij}>((N-1)/2)(|\dot{\rho}_i|_\infty+|\dot{\rho}_j|_\infty)$ is a {\em state-based} gain. The initial conditions satisfy $\sum_{j\in\mathcal{N}}\xi_i(0)=0$ and $\xi_i(0)$ is symmetric. Similar to the proof in \cite{Jiang24}, $z_i$ can be shown to converge to $\sum_{i\in \mathcal{N}}\rho_i(x_i,t)/N$ in finite time. The next corollary extends Theorem \ref{distributed.oneorder} to address the distributed optimization problem with  more general cost functions \eqref{eq.ffi}.

\begin{corollary}\label{distributed.corollary}
Suppose Assumptions \ref{assumption.graph1}, \ref{assumption.boundedness.hessian}, \ref{assumption.newboundedness} hold. For system \eqref{eq.newestimator} and \eqref{eq.ui}-\eqref{eq.alpha} with Algorithm \ref{alg}, let $h_0=\gamma \mu_1/N$ and $k>(h_2+\mu_4)N\sqrt{m}/\mu_1^2+\epsilon_4$ with $0<\gamma<1$ and $\epsilon_4>0$. Then, the optimization problem (\ref{optimizationproblem}) is solved, i.e., $\lim_{t\rightarrow \infty}x_i(t)-r^*(t)=0$, $i\in\mathcal{N}$.
\end{corollary}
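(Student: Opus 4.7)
The plan is to mimic the proof strategy of Theorem~\ref{distributed.oneorder} while carefully accounting for the extra nonlinear summand $\hat f_i$ and the state-dependent gain in the new estimator \eqref{eq.newestimator}. First, I would restate Lemma~\ref{condition} with $H_i(t)$ replaced by the full Hessian $\rho_i(x_i,t)$ and $h_1$ replaced by $\mu_1$; Assumption~\ref{assumption.newboundedness}.(\ref{a}) provides $\lambda_1\!\bigl(\sum_{i\in\mathcal N}\rho_i(x_i,t)\bigr)\ge\mu_1$, so the choice $h_0=\gamma\mu_1/N$ makes the dead-zone argument of Lemma~\ref{condition} go through verbatim, yielding a constant $k_1>0$ with $|\hat z_i(t)-\bar\rho(t)|\le k_1|z_i(t)-\bar\rho(t)|$ where $\bar\rho(t)=\tfrac1N\sum_i\rho_i(x_i,t)$. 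Finite-time convergence of $z_i$ to $\bar\rho$ then follows from \cite{Jiang24}, provided $|\dot\rho_i|_\infty$ remains finite on the interval of existence; this is not automatic because $\dot\rho_i$ depends on $\dot x_i$, and handling this circularity is where finite escape-time analysis (Lemma~\ref{xbounded}) enters, exactly as in Theorem~\ref{distributed.oneorder}.

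Second, I would rewrite the error dynamics \eqref{eq.e}--\eqref{eq.tildex} with $\nabla f_i$ and $\nabla_t f_i$ coming from \eqref{eq.ffi}. The gradient decomposes as $\nabla f_i=H_i(t)x_i+R_i(t)+\nabla\hat f_i(x_i,t)$ and, using Assumption~\ref{assumption.newboundedness}.(\ref{c}), the feedforward signal splits as
\begin{align*}
\nabla_t f_i(x_i,t)=\dot H_i(t)x_i+\dot R_i(t)+g_i(t)x_i+s_i(x_i,t).
\end{align*}
The two linear-in-$x_i$ terms $\dot H_i x_i$ and $g_i x_i$ contribute Hessian-like rates bounded by $h_2\sqrt m$ and $\mu_4$, which is why the lower bound on $k$ in the corollary is inflated to $k>(h_2+\mu_4)N\sqrt m/\mu_1^2+\epsilon_4$; the affine-in-time remainder $s_i(x_i,t)$ is $L^2$-integrable by Assumption~\ref{assumption.newboundedness}.(\ref{c}) and will be treated as a vanishing disturbance.

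Third, I would reuse the composite Lyapunov function from the proof of Theorem~\ref{distributed.oneorder} (sum of the $e$-subsystem term with adaptive gains $\alpha_{ij},\beta_{ij}$, the $\tilde x$-subsystem term, and the estimator-error term $\delta=z-\bar\rho\mathbf 1_N$). Along closed-loop trajectories, the derivative picks up the same cross terms as before, plus an additional contribution of order $|\tilde x|\,\bar s(t)/h_0$ and $|e|\,\bar s(t)/h_0$ coming from $s_i$. Using Young's inequality one converts these into a nonpositive leading part plus a pure $\bar s^2(t)$ term; integrating from $0$ to $\infty$ and invoking $\int_0^\infty \bar s^2(t)\,dt<\infty$ shows that $e$, $\tilde x$ and $\delta$ are bounded and square-integrable. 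A standard Barbalat argument, combined with uniform continuity of $\dot e$ and $\dot{\tilde x}$ (which follows from Assumptions~\ref{assumption.boundedness.hessian} and \ref{assumption.newboundedness} together with the boundedness of $x$), then delivers $e(t)\to0$ and $\tilde x(t)\to0$, hence $x_i(t)-r^*(t)\to0$ via \eqref{eq.xetr}.

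The main obstacle will be reconciling the state-dependent estimator gain $\omega_{ij}(t)$ with the closed-loop analysis: the standard finite-time proof for \eqref{eq.newestimator} presumes $|\dot\rho_i|_\infty$ is finite, which in turn needs $x$ and $\dot x$ to be bounded, which is only guaranteed a~posteriori through the Lyapunov-plus-finite-escape-time machinery. I would resolve this the same way as in Theorem~\ref{distributed.oneorder}: first show that any finite escape is ruled out on $[0,T)$ for arbitrary $T$, thereby legitimising the finite-time estimator bound on every compact subinterval; then close the Lyapunov loop over $[0,\infty)$. The $L^2$ (rather than $L^\infty$) nature of the disturbance $\bar s(t)$ is the secondary technical novelty: it forces the asymptotic step to be completed by Barbalat rather than by the direct LaSalle-type argument used for quadratic costs.
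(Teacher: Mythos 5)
Your proposal follows essentially the same route as the paper's proof: extending Lemma \ref{condition} to $\mu_1$ and $\bar\rho$ under the choice $h_0=\gamma\mu_1/N$, establishing finite-time convergence of the new estimator \eqref{eq.newestimator}, re-deriving the $e$ and $\tilde x$ Lyapunov bounds with the extra $g_i$ and $s_i$ contributions (hence the inflated gain $k>(h_2+\mu_4)N\sqrt m/\mu_1^2+\epsilon_4$), absorbing $\bar s(t)$ via Young's inequality and $\int_0^\infty\bar s^2(t)\,dt<\infty$, and closing with the finite escape-time lemma and Barbalat. The only cosmetic deviation is that you fold the estimator error $\delta$ into the composite Lyapunov function, whereas the paper keeps the $\bar\delta$ subsystem analysis separate (finite-time convergence with its own Lyapunov function) and bounds $\int_0^\infty W_3\,dt$ by splitting the integral at the convergence time; since you defer to the same escape-time machinery for the cross terms, this does not change the substance of the argument.
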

\begin{proof} 
See Appendix \ref{appendix.proof4}. 
\end{proof}
\begin{remark}
    Different from estimator \eqref{eq.z}, the implementation of estimator \eqref{eq.newestimator} does not require a known upper bound on $|\dot{\rho}_i|_\infty$, but at the expense of requiring additional exchange of information $|\dot{\rho}_i|_\infty$ with neighboring agents. 
\end{remark}

\section{Simulations}
\label{section.simulation}

{\bf Example 1.} Consider $20$ agents interacting over a circular communication network. With the decision variable $y=(y_1,y_2,y_3)\in\mathbb{R}^3$, each agent is assigned with a local cost function. For agent $i=1,\cdots,6$, $f_i=\sum_{j=1}^32y_j^2 +0.1i\sin (t) y_j$. For agent $i=7,\cdots,12$, $f_i=(y_1+\cos(t)y_2+0.1iy_3)^2+\cos(t)y_1+0.1iy_2+y_3$. The agent $i=13,\cdots,20$ is assigned with $f_{13}=\sum_{j=1}^3\exp(-t)y_j^2$, $f_{14}=\sum_{j=1}^3\sin(t) y_j^2$, $f_{15}=\sum_{j=1}^3-(y_j-2)^2$, $f_{16}=\sum_{j=1}^3\sin(y_j)$, $f_{17}=\sum_{j=1}^3(\tanh(t)+1)\ln(1+\exp(y_j))$, $f_{18}=\sum_{j=1}^3y_j/(y_j^2+1)$, $f_{19}=\sum_{j=1}^3-(1/(t+1))\exp(-y_j^2)$ and $f_{20}=\sum_{j=1}^3y_j^2+\exp(-t)\ln(1+y_j^2)$, respectively. Note that $f_{19}$ is inspired by the density function of normal distribution, and $f_{17}$ and $f_{20}$ are adapted from \cite{Simonetto20} and \cite{Bastianello24}, respectively. 

In fact, except for $f_{15}$ and $f_{17}$, all local cost functions have their {\em own} global minimum. However, only $f_1$-$f_6$ and $f_{20}$ are uniformly strongly convex. $f_{13}$ and $f_{17}$ are uniformly strictly but not strongly convex, $f_7$-$f_{12}$ are just uniformly convex, while $f_{14}, f_{15}, f_{16}, f_{18}$ and $f_{19}$ are even non-convex. It can be checked that Assumptions \ref{assumption.graph1}, \ref{assumption.boundedness.hessian}, \ref{assumption.newboundedness} are satisfied. By applying the distributed algorithm in Corollary \ref{distributed.corollary} with $h_0=0.5$ and $k=35$, the optimization problem \eqref{optimizationproblem} is solved. With $x_i$ corresponding to the decision variable $y$ of agent $i$, Fig. \ref{fig.errorexample1} shows the trajectory of $|x_i-r^*(t)|$ that converges to zero asymptotically. 
\begin{figure}[!h]
    \centerline{\includegraphics[width=1.0\columnwidth]
    {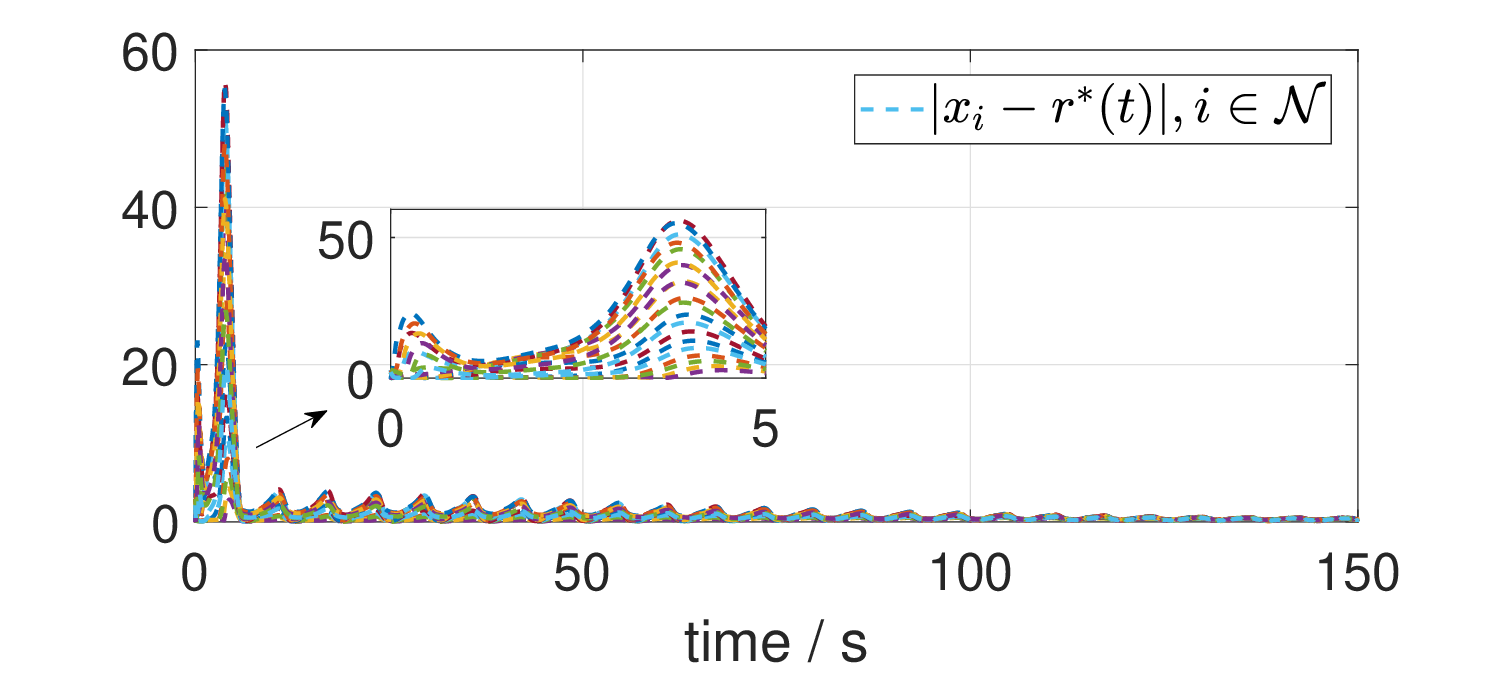}}
    \caption{The trajectory of $|x_i-r^*(t)|$, $i\in\mathcal{N}$.}
    \label{fig.errorexample1}
\end{figure}

{\bf Example 2.} Consider a scenario in Fig. \ref{fig.scenario}, where $N=6$ UAVs are used as relay nodes for a communication system supporting long distance robots collaboration. The signal power of robot $i\in\mathcal{N}$ is modeled by \cite{tse2005wireless} as $g_{i}(x_i,t)=p_i(t)/|x_{i}-d_{i}(t)|^2$, $x_i\neq d_i(t)$, in a 2D scenario, where $p_i(t)>0$ and $d_i(t)\in\mathbb{R}^2$ denotes the signal transmission power and the position of robot $i$, respectively. The communication among these robots is assisted by UAVs with relay capabilities\cite{Merwaday15,EmergencyNetworks}. Meanwhile, the UAVs are required to form a formation to expand communication coverage. \begin{figure}[!t]
\centerline{\includegraphics[width=0.9\columnwidth]
    {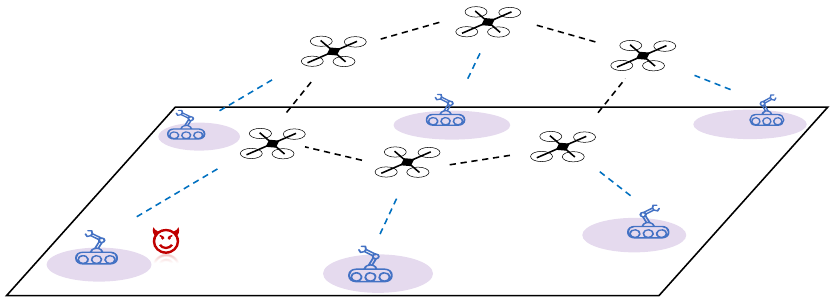}}
    \caption{Multi-UAV relay communication. Purple ellipses denote robot working ranges. The devil icon denotes an attacker.}
    \label{fig.scenario}
\end{figure}
To establish reliable communication links, all UAVs need to search for positions as close as possible to the corresponding robots. This leads to the optimization problem:
\begin{align}\label{eq.example2}
    \min_{x_1,\cdots,x_N}\sum_ {i\in \mathcal{N}}{f_i(x_i,t)},~~~~\text {s.t.} ~~x_i-x_j=\tau_i-\tau_j,
\end{align}
where $f_i(x_i,t)=g_i^{-1}(x_i,t)$ and $\tau_i\in\mathbb{R}^2$ denotes a desired formation. Each UAV $i$ can obtain the real-time parameters $p_i(t)$ and $d_i(t)$ of robot $i$ to construct $f_i(x_i,t)$. The optimal solution of optimization problem \eqref{eq.example2} is $x^*_i=r^*+\tau_i$, $i\in \mathcal{N}$, with $r^*$ being the optimal solution of $\min\sum_ {i\in \mathcal{N}}{f_i(r+\tau_i,t)}$. Then, $x^*_i$ can be regarded as the optimal relay position of UAV $i$. Moreover, the following two practical cases are considered:
\begin{enumerate}[(1)]
    \item {The signal transmission power of a robot falls below $p_0$,}
    \item {The attacker's actions damage a robot at time instant $t_0$.}
\end{enumerate}
In both cases, the robots lose the ability to transmit their parameters $p_i(t)$ and $d_i(t)$ to UAVs, preventing the UAVs from constructing $f_i(x_i,t)$. To simulate this, the corresponding local cost function is set to zero\footnote{To ensure almost everywhere differentiability of parameter functions, we use a low-pass filter to attenuate the coefficient $1/p_i$ to $0$, thereby attenuating the cost function to $0$. Although there are non-differentiable points, the proof in this note still holds without using the nonsmooth analysis, as the Lyapunov functions employed are continuously differentiable.}, which is non-strongly convex.
 
As an example, suppose that the parameters of power model (SI units are used here) are given by $p_1=p_2=10$, $p_3=10\exp(-0.05t)$, $p_4=8$, $p_5=10/(1+0.06t)$, $p_6=6+\sin(0.3t)$, $d_1=(30,30)^{\top}$, $d_2=(30+10\cos(0.3t),0)^{\top}$, $d_3=(-30,-30)^{\top}$, $d_4=(-30,-30+10\sin(0.3t))^{\top}$, $d_5=(-30,0)^{\top}$ and $d_6=(-30,30)^{\top}$. Let $p_0=2$. The desired formation is given by a regular hexagon as $\tau_1=(5,10)^{\top}$, $\tau_2=(10,0)^{\top}$, $\tau_3=(5,-10)^{\top}$, $\tau_4=(-5,10)^{\top}$, $\tau_5=(-10,0)^{\top}$ and $\tau_6=(-5,10)^{\top}$. The initial states are given by $x_1(0)=(-5,10)^{\top}$, $x_2(0)=(-15,10)^{\top}$, $x_3(0)=(-25,10)^{\top}$, $x_4(0)=(-25,0)^{\top}$, $x_5(0)=(-15,0)^{\top}$ and $x_6(0)=(-5,0)^{\top}$. Following the aforementioned two cases, we know $p_3<p_0$ if $t\geq 32$s. Moreover, assume that robot $5$ is damaged by the attacker at $t=t_0=67$s. A circular communication network $\mathcal{G}$ is used for UAVs. One may check that Assumptions \ref{assumption.graph1}-\ref{assumption.boundedness.hessian} are satisfied. Then, each UAV modeled as a single-integrator dynamics is driven by the proposed distributed algorithm in Theorem \ref{distributed.oneorder} with $h_0=0.1$, $\omega=1$ and $k=20$ to search the optimal relay position. Fig. \ref{fig.position} shows the trajectories of $x_i$, $i\in\mathcal{N}$, $r^*(t)$ and the snapshots of them at $t=40$s, $90$s, $140$s. Fig. \ref{fig.error} shows the trajectory of $ |x_i-\tau_i-r^*(t)|$, $i\in\mathcal{N}$. It can be seen that even if $f_3=0$ when $t\geq 32$s and $f_5=0$ when $t\geq 67$s, the asymptotical convergence of $x_i$ to $\tau_i+r^*(t)$ still be achieved.
\begin{figure}[!h]
\centerline{\includegraphics[width=1.0\columnwidth]
    {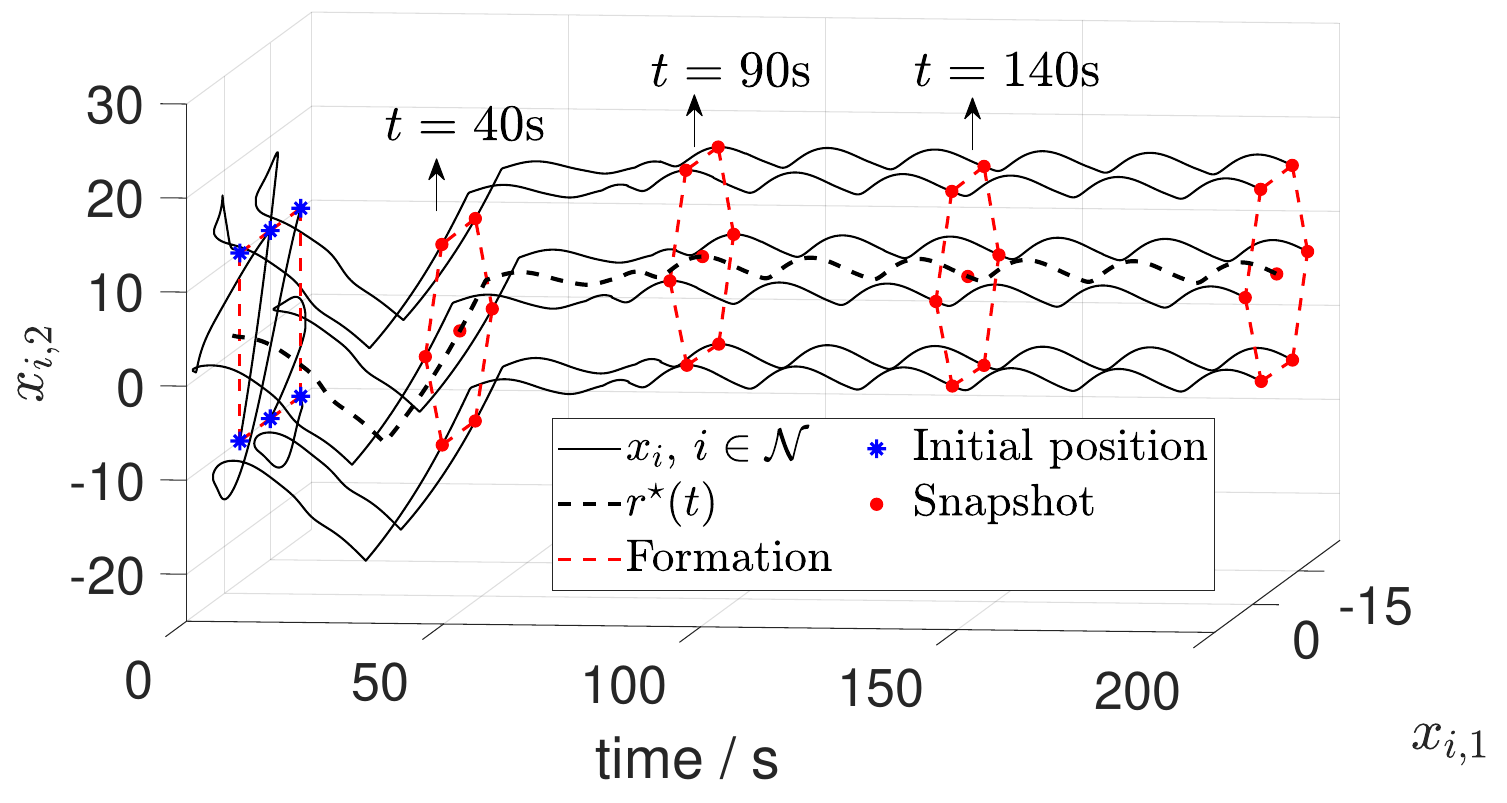}}
    \caption{The trajectories of $x_i$, $i\in\mathcal{N}$, and $r^*(t)$.}
    \label{fig.position}
\end{figure}
\begin{figure}[!h]
\centerline{\includegraphics[width=1.0\columnwidth]
    {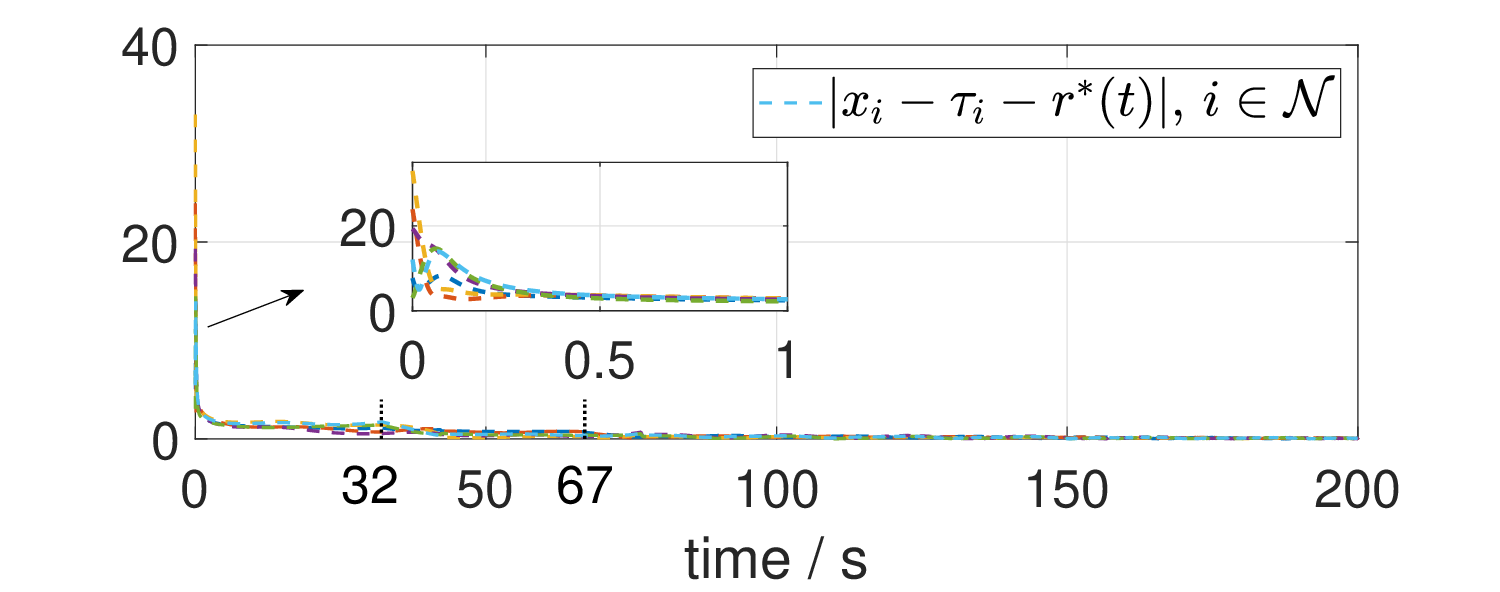}}
    \caption{The trajectory of $ |x_i-\tau_i-r^*(t)|$, $i\in\mathcal{N}$.}
    \label{fig.error}
\end{figure}
 
\section{Conclusion}
\label{Conclusion}
This note first proposes a systematic design incorporating an average estimator, the DZA and an adaptive optimizer to solve DTQO. It relaxes the assumptions that positive definite, identical, diagonal and time-invariant Hessians of local cost functions. It does not require the waiting time during algorithm implementation, nor certain prior knowledge of the communication graph and global parameter of the cost functions. It can be shown that the global asymptotic convergence to the optimal solution. The result is also extended to a class of non-quadratic cost functions. This advancement enables large-scale multi-robot systems to achieve autonomous optimization more flexibly in dynamic environments.

\appendix

\subsection{ Two Auxiliary Lemmas}\label{lemmas}
\begin{lemma}\label{mirsky}
    \cite[Corollary 7.4.9.3]{matrix}: Let $A,B\in\mathbb{R}^{m\times m}$ be Hermitian and let $|\cdot|_u$ be an unitarily invariant norm on $\mathbb{R}^{m\times m}$. Then, $|\diag\lambda^\downarrow(A)-\diag\lambda^\downarrow(B)|_u\leq|A-B|_u$, where $\diag\lambda^\downarrow(A)$($\diag\lambda^\downarrow(B)$) is the diagonal matrix whose diagonal entries are the nonincreasingly ordered eigenvalues of $A$($B$).
\end{lemma}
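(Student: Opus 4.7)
The plan is to derive the claim from two classical results in matrix analysis: (i) von Neumann's theorem, which characterizes every unitarily invariant norm $|\cdot|_u$ on $\mathbb{R}^{m\times m}$ as a symmetric gauge function $\Phi$ of the singular value vector, i.e., $|M|_u = \Phi(\sigma(M))$; and (ii) Lidskii's eigenvalue inequality, which states that for Hermitian $A,B\in\mathbb{R}^{m\times m}$, the vector of eigenvalue differences $\lambda^\downarrow(A)-\lambda^\downarrow(B)$ is majorized by the eigenvalue vector $\lambda(A-B)$.

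First I would rewrite both sides of the target inequality as $\Phi$ evaluated on suitable singular value vectors. The matrix $\diag\lambda^\downarrow(A)-\diag\lambda^\downarrow(B)$ is real and diagonal, so its singular values are the entries $|\lambda_i^\downarrow(A)-\lambda_i^\downarrow(B)|$ rearranged in nonincreasing order. The matrix $A-B$ is Hermitian by hypothesis, hence has real spectrum, and its singular values are $|\lambda_i(A-B)|$. The claim therefore reduces to showing that the nonincreasing rearrangement of $\{|\lambda_i^\downarrow(A)-\lambda_i^\downarrow(B)|\}_{i=1}^m$ is weakly majorized by that of $\{|\lambda_i(A-B)|\}_{i=1}^m$.

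Next I would apply Lidskii to obtain the signed majorization $\lambda^\downarrow(A)-\lambda^\downarrow(B)\prec\lambda(A-B)$, and then pass to weak majorization of coordinate-wise absolute values using the standard fact that for any convex function $\varphi\colon\mathbb{R}\to\mathbb{R}$ with $\varphi(0)=0$ and $\varphi$ nondecreasing on $[0,\infty)$, the relation $u\prec v$ implies $\sum\varphi(|u_i|)\leq\sum\varphi(|v_i|)$. Invoking Ky Fan's dominance theorem, which asserts that symmetric gauge functions of nonnegative vectors respect weak majorization, then yields $\Phi(|\lambda^\downarrow(A)-\lambda^\downarrow(B)|)\leq\Phi(|\lambda(A-B)|)$, which is precisely the stated inequality.

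The main obstacle is Lidskii's inequality itself, whose proof requires real work. I would establish it via Wielandt's minimax characterization of partial sums of eigenvalues of a Hermitian matrix, which yields the partial-sum bound $\sum_{k\in K}[\lambda_k^\downarrow(A)-\lambda_k^\downarrow(B)]\leq\sum_{k=1}^{|K|}\lambda_k^\downarrow(A-B)$ for every index set $K\subseteq\{1,\dots,m\}$, and then use the standard equivalence between such partial-sum inequalities and majorization. An alternative is to first prove the Hoffman--Wielandt identity in the Frobenius norm via Birkhoff's theorem on doubly stochastic matrices and recover Lidskii as a corollary, but the minimax route keeps the argument at the level of eigenvalues and avoids the detour through the Frobenius norm.
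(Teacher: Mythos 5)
The paper offers no proof of this lemma at all: it is quoted directly as Corollary 7.4.9.3 of the reference \cite{matrix} (Mirsky's theorem on eigenvalue perturbation in unitarily invariant norms) and is only \emph{used}, in the proof of Lemma \ref{condition}, to lower-bound $|\delta_i|$ when $z_i(t)$ is near-singular. Your sketch is therefore doing more than the paper does, and it follows the standard textbook route to Mirsky's theorem: von Neumann's identification of unitarily invariant norms with symmetric gauge functions of singular values, Lidskii's majorization $\lambda^{\downarrow}(A)-\lambda^{\downarrow}(B)\prec\lambda(A-B)$, the passage from majorization to weak majorization of absolute values via convexity, and Fan's dominance principle; closing the one nontrivial gap (Lidskii) through Wielandt's minimax characterization is a legitimate and complete plan. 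The only inaccuracy is your parenthetical alternative: the Hoffman--Wielandt inequality is merely the Frobenius-norm ($\ell_2$) instance of Mirsky's bound, and it does not imply Lidskii's full majorization statement, so Lidskii cannot be ``recovered as a corollary'' of it --- keep the minimax route you already designate as primary.
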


\begin{lemma}\label{xbounded}
    Suppose Assumptions \ref{assumption.graph1}-\ref{assumption.boundedness.hessian} hold. For system \eqref{eq.z}-\eqref{eq.alpha} with Algorithm \ref{alg}, the state $x_i(t)$ does not have a finite escape time, i.e., for any $\bar{t}\in[0,\infty)$  the state $x_i(t)$ is bounded for all $t\in[0,\bar{t}]$. 
    \end{lemma}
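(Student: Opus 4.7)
The plan is to construct a non-negative, continuously differentiable function $W(t)$ of the state and the adaptive gains satisfying a differential inequality of the form $\dot W \le c_1 W + c_2 \sqrt{W} + c_3$, from which finite escape is ruled out by a standard comparison argument. Three ingredients are the basis: (i) by Algorithm \ref{alg} every eigenvalue of $\hat z_i(t)$ has absolute value at least $h_0$, so $|\hat z_i^{-1}(t)| \le 1/h_0$ uniformly in $t$; (ii) Assumption \ref{assumption.boundedness.hessian} provides uniform (if unknown) bounds on $|H_i(t)|,|R_i(t)|,|\dot H_i(t)|,|\dot R_i(t)|$; (iii) the smoothed sign map $S(\cdot)$ is componentwise strictly bounded by $1$. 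Together these yield $|\phi_i(x_i,t)| \le a|x_i| + b$ for some uniform constants $a,b>0$.

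A natural choice is $W(t) = \tfrac{1}{2}|x(t)|^2 + \tfrac{1}{2}\sum_{(i,j)\in\mathcal{E}}\beta_{ij}^2(t)$; crucially, the $\alpha_{ij}$ need not appear because the adaptive consensus term $-x^\top \bar L_\otimes x \le 0$ by the Laplacian structure and can simply be discarded. Differentiating $W$ along \eqref{eq.compact} and \eqref{eq.alpha}, the only terms that couple $x$ and $\beta$ are $-x^\top(DB)_\otimes S(D_\otimes^\top x)$ and the contribution $\sum \beta_{ij}|x_i-x_j|_1$ coming from $\dot\beta_{ij}$; using $|S|_\infty<1$ together with $|x_i-x_j|_1 \le \sqrt m |x_i-x_j|$, each is bounded by a constant multiple of $|x||\beta|$ and Young's inequality absorbs them into $c|x|^2 + c'|\beta|^2 \le c'' W$. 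The remaining piece $x^\top\phi$ is at most $a|x|^2 + b|x| \le a' W + b\sqrt W$, so collecting all contributions yields the claimed inequality $\dot W \le c_1 W + c_2 \sqrt W + c_3$.

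The comparison step is routine: the substitution $u = \sqrt{W + c_3/c_1}$ converts $\dot W \le c_1 W + c_2 \sqrt W + c_3$ into the linear inequality $\dot u \le (c_1/2) u + c_2/2$, whose solutions exist on $[0,\infty)$, so $W(t)$ and therefore each $x_i(t)$ is bounded on every finite interval $[0,\bar t]$. Discontinuity of $\hat z_i$ from Algorithm \ref{alg} is not an obstruction here, because the above bounds on $|\hat z_i^{-1}|$, and hence on the right-hand side of the $x_i$ dynamics, are pointwise uniform and the Lyapunov inequality holds almost everywhere, which is enough for Filippov solutions (as noted in Remark \ref{nonsmooth}).

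The main obstacle, in my view, is choosing the correct set of adaptive gains to include in $W$. A naive choice that includes $\alpha_{ij}$ introduces a term $\sum \alpha_{ij}\dot\alpha_{ij} = \sum \alpha_{ij}|x_i-x_j|^2$ that is genuinely cubic in $(x,\alpha)$ and cannot be absorbed into a linear-in-$W$ bound; the resolution is that the growth of $\alpha_{ij}$ is already dissipated through the consensus term $-x^\top \bar L_\otimes x$, so only $\beta_{ij}$—whose coupling with $x$ is sublinear—needs to be tracked explicitly. Once this is recognised, the rest of the argument is a straightforward bookkeeping of quadratic, linear, and constant terms, finished by the scalar comparison.
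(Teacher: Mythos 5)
Your proposal is correct, but it takes a genuinely different route from the paper. The paper argues componentwise: it isolates the largest-magnitude entry $x_{i^*p^*}$, splits its dynamics into a part that is globally Lipschitz in $x_{ip}$ (using the uniform bounds on $H_i,\dot H_i,R_i,\dot R_i$ and $|\hat z_i^{-1}|\le 1/h_0$, so existence/uniqueness theory rules out escape for that part) plus the adaptive consensus contributions $W_1,W_2$, which are shown to have the favorable sign at the extremal component; combining the two statements excludes finite escape. You instead build the energy function $W=\tfrac12|x|^2+\tfrac12\sum\beta_{ij}^2$, discard the $\alpha$-weighted Laplacian term because $\alpha_{ij}(0)\ge 0$ and $\dot\alpha_{ij}\ge 0$ keep $\bar L(t)$ positive semidefinite, bound the $\beta$--$x$ couplings bilinearly and $x^\top\phi$ by $a|x|^2+b|x|$, and close with the comparison $\dot W\le c_1W+c_2\sqrt W+c_3$. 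Your key structural observation—that $\alpha$ must not be tracked in $W$ (its coupling is cubic) while its effect is purely dissipative, whereas $\beta$'s coupling is bilinear and hence Gronwall-compatible—is exactly what makes this work, and it buys you a cleaner argument that sidesteps the paper's max-component bookkeeping (e.g., the maximizing index switching in time) at the price of an explicit Lyapunov-type construction. Two small points you use implicitly and should state: $|\hat z_i^{-1}|\le 1/h_0$ requires the symmetry of $\hat z_i$ (which holds since $z_i(t)$ remains symmetric and Algorithm \ref{alg} outputs either $h_0I_m$ or a stored $z_i$), and the nonnegativity of the $\alpha_{ij}$ is what justifies "Laplacian structure"; both are immediate, so they are completeness remarks rather than gaps.
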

        
    \begin{proof}
    Denote $x_{ip}\in\mathbb{R}$ the $p$-th row element of $x_{i}$, where $i\in\mathcal{N}$ and $p\in\mathcal{M}$. Similar subscripts also appear in certain matrices that follow. We take $x_{ip}$ as an example to check its boundedness in a finite time. The dynamics of $x_{ip}$ is given by
    \begin{align}\label{eq.fenliang}
    \dot{x}_{ip}=&-kH_{ip}x_{i}-kR_{ip}-{\hat{z}_{ip}}^{-1}\dot{H}_{i}x_{i}-{\hat{z}_{ip}}^{-1}\dot{R}_{i}\nonumber\\
    &+c_1(t)\sum_{j\in\mathcal{N}_{i}}S(x_{ip}-x_{jp})+W_1+W_2,
    \end{align}
    where $0\leq c_1(t)=\int_{0}^{t}m\epsilon_1 \eta_t\,dt=(m\epsilon_1/\epsilon_2)(1-\eta_t)<\infty$, $W_1=-\sum_{j\in\mathcal{N}_{i}}(\int_{0}^{t} |x_{i}-x_j|^2 \,dt) (x_{ip}-x_{jp})$, and $W_2=-\sum_{j\in\mathcal{N}_{i}}(\int_{0}^{t}|x_{i}-x_j|_1\,dt)S(x_{ip}-x_{jp})$. We first consider the first half of \eqref{eq.fenliang}, i.e.
    \begin{align}\label{eq.example}
    \dot{x}_{ip}=&-kH_{ip}x_{i}-kR_{ip}-{\hat{z}_{ip}}^{-1}\dot{H}_{i}x_{i}-{\hat{z}_{ip}}^{-1}\dot{R}_{i}\nonumber\\
    &+c_1(t)\sum_{j\in\mathcal{N}_{i}}S(x_{ip}-x_{jp})=:g_i(x_{ip},t).
    \end{align}
    For all $t\in[0,\bar{t}]$, we know $g_i(x_{ip},t)$ is piecewise continuous with respect to $t$. With Assumptions \ref{assumption.graph1}-\ref{assumption.boundedness.hessian} satisfied, the boundedness of $H_i$, $\dot{H}_i$, $\hat{z}_{i}^{-1}$ can be guaranteed, which implies that $g_i(x_{ip},t)$ is global Lipschitz with respect to $x_{ip}$. Then, by \cite[Theorem 3.2]{Nonlinear02khalil}, we know \eqref{eq.example} has a unique solution over $[0,\bar{t}]$. In view of the fact that $\bar{t}$ can be arbitrarily large, we make the following statement,
    \begin{itemize}
        \item [\bf S1.] $x_{ip}$ in system \eqref{eq.example} does not have a finite escape time, $\forall i\in\mathcal{N}$ and $\forall p\in\mathcal{M}$.
    \end{itemize}
    We further consider $W_1$ and $W_2$ in system \eqref{eq.fenliang}. Assume that $i^*,p^*\in\mathop{\arg\max}_{i\in\mathcal{N},p\in\mathcal{M}}|x_{ip}|$. Without loss of generality, let $x_{i^*p^*}>0$. If $x_{i^*}=x_j$, $\forall j\in\mathcal{N}_{i^*}$, we have $W_1=W_2=0$. If there exists at least one $j\in\mathcal{N}_{i^*}$ such that $x_{i^*}\neq x_j$, it follows that $W_1,W_2\leq 0$. Similarly, assume that $x_{i^*p^*}<0$. If $x_{i^*}=x_j$, $\forall j\in\mathcal{N}_{i^*}$, we have $W_1=W_2=0$. If there exists at least one $j\in\mathcal{N}_{i^*}$ such that $x_{i^*}\neq x_j$, then we have $W_1,W_2\geq0$. Therefore, we make another statement,
    \begin{itemize}
        \item [\bf S2.] $W_1+W_2\leq0$ if $x_{i^*p^*}>0$ and $W_1+W_2\geq0$ if $x_{i^*p^*}<0$.
    \end{itemize}   
    
By combining statements {\bf S1} and {\bf S2}, we conclude from \eqref{eq.fenliang} that the trajectory of $x_{i^*p^*}$ does not escape to infinity at a finite time. With the fact that $i^*,p^*\in\mathop{\arg\max}_{i\in\mathcal{N},p\in\mathcal{M}}|x_{ip}|$, thus, $x_i(t)$ does not have a finite escape time, $\forall i\in\mathcal{N}$.
\end{proof} 

\subsection{Proof of Lemma \ref{condition}}\label{appendix.proof2}
Define $\delta_i=z_i(t)-\bar{H}(t)$. We next establish the relation between $|\hat{z}_i-\bar{H}|$ and $|\delta_i|$ in the following two cases:

{\em (1) The non-singular case of matrix $z_i(t)$}. Recall Algorithm \ref{alg}. For all $t\geq 0$, Once $\mathop{\min}_{p\in\mathcal{M}}|\lambda_p(z_i(t))|\geq h_0$, we have $\hat{z}_i(t)=z_i(t)$, implying $|\hat{z}_i(t)-\bar{H}|=|\delta_i|$.

{\em (2) The (near-)singular case of matrix $z_i(t)$}.
For all $t\geq 0$, if $\mathop{\min}_{p\in\mathcal{M}}|\lambda_p(z_i(t))|<h_0$, we have
\begin{align}\label{eq.min}
\mathop{\min}_{p\in\mathcal{M}}|\lambda_p(z_i(t))|<h_0=\gamma h_1/N< {\lambda_1}(\bar{H}(t)),
\end{align}
where the last inequality follows the fact that $\lambda_1(\bar{H}(t))\geq h_1/N$ given by Assumption \ref{assumption.lipschitz.stronglyconvex}. Before establishing the relation between $|\hat{z}_i-\bar{H}|$ and $|\delta_i|$, we first show the symmetry of $z_i$. Rewrite \eqref{eq.z} as $\dot{z}_i=-\omega\sum_{j\in\mathcal{N}_i}$ $\sgn(z_i-z_j)+\dot{H}_i(t)$. Since $H_i(t)$ is symmetric, $\dot{H}_i(t)$ must also be symmetric, $\forall t\geq0$. Symmetric $\xi_i(0)$ indicates that $z_i(0)$ is symmetric as well. It is clear that the function $\sgn(\cdot)$ does not change the symmetry of its matrix independent variable. Thus, $z_i(t)$ is always symmetric. By Lemma \ref{mirsky} along with \eqref{eq.min}, we conclude that $|\delta_i|\geq (1-\gamma)h_1/N>0$. Consequently, we can always find a positive constant $k^s_i$ such that $|\hat{z}_i(t)-\bar{H}|\leq k^s_i|\delta_i|$.

Now, based on the analysis of the above two case, we have
$|\hat{z}_i(t)-\bar{H}|\leq \max\{1,k^s_i\}|\delta_i|={k}_1|\delta_i|$, $\forall t\geq 0$, where ${k}_1=\max_{i\in\mathcal{N}}\{1,k^s_i\}$. This means $|\hat{z}_i-\bar{H}|$ can always be linearly bounded by $|\delta_i|$, whether $z_i(t)$ is singular or not.

\subsection{Proof of Theorem \ref{distributed.oneorder}}\label{appendix.proof3}
Recall the definition of $\delta_i$ in Appendix \ref{appendix.proof2}, and let $\delta=(\delta_1,\cdots,\delta_N)^\top$. The convergence analysis of the $\delta$ subsystem has been established by \cite{fei12} and is thus omitted here. The following proof focuses on the analysis of $e$ and $\tilde{x}$ subsystems, and the comprehensive analysis of the closed-loop system.

{\em{(1)} Analysis of $e$ subsystem.} Consider the function
\begin{align}
V_1=e^{\top}e+\frac{1}{2}\sum_{i\in\mathcal{N}}\sum_{j\in \mathcal{N}_i}(\alpha_{ij}-\bar{\alpha})^2+\frac{1}{2}\sum_{i\in\mathcal{N}}\sum_{j\in \mathcal{N}_i}(\beta_{ij}-\bar{\beta})^2,  \nonumber
\end{align}
where $\bar{\alpha}$ and $\bar{\beta}$ are positive constants to be determined. Its derivative along \eqref{eq.alpha} and \eqref{eq.e} is given by
\begin{align}\label{eq.dotv1}
    \dot{V}_1=&\sum_{i\in\mathcal{N}}\sum_{j\in \mathcal{N}_i}(\alpha_{ij}-\bar{\alpha})\dot{\alpha}_{ij}+\sum_{i\in\mathcal{N}}\sum_{j\in \mathcal{N}_i}(\beta_{ij}-\bar{\beta})\dot{\beta}_{ij}\nonumber\\
    &-2e^{\top}\bar{L}_{\otimes}e+\Lambda_3+\Lambda_4,
\end{align}
where $\Lambda_3=-2e^{\top}(DB)_{\otimes}S(D^{\top}_{\otimes}e)$ and $\Lambda_4=2e^{\top}M_{\otimes}\phi$. Note
\begin{align}\label{eq.ese}
\Lambda_3=&-2\sum_{i\in\mathcal{N}}\sum_{j\in\mathcal{N}_i}\beta_{ij}e_i^{\top}S(e_i-e_j)\nonumber\\
=&-\sum_{i\in\mathcal{N}}\sum_{j\in\mathcal{N}_i}\beta_{ij}(e_i-e_j)^{\top}S(e_i-e_j)\nonumber\\
\leq& -\sum_{i\in\mathcal{N}}\sum_{j\in\mathcal{N}_i}\beta_{ij}(|e_i-e_j|_1-m\epsilon_1 \eta_t),
\end{align}
where the second equality follows from Assumption \ref{assumption.graph1} and and third one follows the fact that $\beta_{ij}\geq0$ and $yS(y)\geq|y|-\epsilon_1 \eta_t$ with $y\in\mathbb{R}$. By \eqref{eq.phi}, we obtain $|\phi|\leq {\phi}_1|x|+{\phi}_2$ with ${\phi}_1=|\diag\{kH_i+\hat{z}_{i}^{-1}\dot{H}_i\}|$, ${\phi}_2=|((kR_1+\hat{z}_{1}^{-1}\dot{R}_1)^\top,\cdots,(kR_N+\hat{z}_{N}^{-1}\dot{R}_N)^\top)^\top|$. The boundedness of $H_i$, $\dot{H}_i$, $\hat{z}_{i}^{-1}$, $R_i$ and $\dot{R}_i$ implies the boundedness of ${\phi}_1$ and ${\phi}_2$. By denoting $\bar{m}=|M_{\otimes}|$, we have
\begin{align}\label{eq.ephi}
\Lambda_4\leq&2\bar{m}{\phi}_1|e|^2+2\bar{m}{\phi}_1\sqrt{N}|e||\tilde{x}|+2\bar{m}({\phi}_1\sqrt{N}|r^*|+{\phi}_2)|e|\nonumber\\
\leq&b_1|e|^2+\frac{\sigma_1h_1}{N}|\tilde{x}|^2+2\bar{m}({\phi}_1\sqrt{N}|r^*|+{\phi}_2)|e|,
\end{align}
where $\sigma_1>0$, $b_1=2\bar{m}{\phi}_1+\bar{m}^2{\phi}_1^2N^2/(\sigma_1h_1)$, the first inequality is obtained by \eqref{eq.xetr}, and the second inequality follows from Young's inequality \cite{KKKbook}. By Assumption \ref{assumption.graph1} again, we have
\begin{align}\label{eq.alphaij}
\sum_{i\in\mathcal{N}}\sum_{j\in \mathcal{N}_i}(\alpha_{ij}-\bar{\alpha})\dot{\alpha}_{ij}=&\sum_{i\in\mathcal{N}}\sum_{j\in \mathcal{N}_i}(\alpha_{ij}-\bar{\alpha})|e_i-e_j|^2\nonumber\\
=&2e^{\top}(\bar{L}-\bar{\alpha}L)_{\otimes}e.
\end{align}
Substituting \eqref{eq.alpha}, \eqref{eq.ese}, \eqref{eq.ephi} and \eqref{eq.alphaij} into \eqref{eq.dotv1} yields
\begin{align}
\dot{V}_1\leq& -2\bar{\alpha}e^{\top}L_{\otimes}e+b_1|e|^2-\bar{\beta}\sum_{i\in\mathcal{N}}\sum_{j\in\mathcal{N}_i}|e_i-e_j|_1+\frac{\sigma_1h_1}{N}|\tilde{x}|^2\nonumber\\
&+2\bar{m}({\phi}_1\sqrt{N}|r^*|+{\phi}_2)|e|+m\epsilon_1 N^2\bar{\beta}\eta_t.
\end{align}
With the fact that $e^{\top}(1_N\otimes I_m)\equiv 0$, we conclude from \cite{Olfati} that $-e^{\top}L_{\otimes}e\leq-\lambda_2(L)|e|^2$. It can be shown that
\begin{align}\label{eq.ee}
&-\sum_{i\in\mathcal{N}}\sum_{j\in\mathcal{N}_i}|e_i-e_j|_1\nonumber\\
=&-2|D^{\top}_{\otimes}e|_1\leq-2\sqrt{e^{\top}(DD^{\top})_{\otimes}e}\leq-2\sqrt{\lambda_2(L)}|e|,
\end{align}
where the last inequality follows from $L=DD^{\top}$. By selecting $\bar{\beta}=\bar{m}({\phi}_1\sqrt{N}|r^*|+{\phi}_2)/\sqrt{\lambda_2(L)}$, we have
\begin{align}\label{eq.dotv1boound}
\dot{V}_1\leq& -2\lambda_2(L)\bar{\alpha}|e|^2+b_1|e|^2+\frac{\sigma_1h_1}{N}|\tilde{x}|^2+m\epsilon_1 N^2\bar{\beta}\eta_t.
\end{align}

{\em{(2)} Analysis of $\tilde{x}$ subsystem.} Inspired by the form of $\hat{z}^{-1}_i\nabla_t f_i(x_i,t)$, adding $\pm {H}^{-1}\sum_{i\in\mathcal{N}}\nabla_t f_i(x_i,t)$ in \eqref{eq.tildex} yields
\begin{align}
\dot{\tilde{x}}=&-\frac{k}{N}\textbf{1}^{\top}_N\nabla F(x,t)+\Lambda_1+\Lambda_2,
\end{align}
where $\Lambda_1={H}^{-1}\sum_{i\in\mathcal{N}}\nabla_t f_i(r^*,t)-\nabla_t f_i(x_i,t)$ and $\Lambda_2=\sum_{i\in\mathcal{N}}(\bar{H}^{-1}-\hat{z}^{-1}_i)\nabla_t f_i(x_i,t)/N$. The latter follows from $H^{-1}=\bar{H}^{-1}/N$. We find that $\Lambda_1$ can be bounded by $e$ and $\tilde{x}$, i.e.,
\begin{align}\label{eq.lambda1}
\Lambda_1=-{H}^{-1}\sum_{i\in\mathcal{N}}\dot{H}_i(e_i+\tilde{x})\leq \frac{h_2\sqrt{mN}}{h_1}|e+1_N\otimes\tilde{x}|.
\end{align}
Note that 
\begin{align}
    \bar{H}^{-1}-\hat{z}^{-1}_i=\bar{H}^{-1}(\hat{z}_i-\bar{H})\hat{z}^{-1}_i.
\end{align}
By Lemma \ref{condition}, along with the boundedness of $\bar{H}^{-1}$ and $\hat{z}^{-1}_i$, there must be a positive constant ${k}_2={k}_1|\bar{H}^{-1}|$ $\max_{i\in\mathcal{N}}\{|\hat{z}^{-1}_i|\}$ such that
\begin{align}
\Lambda_2\leq\frac{1}{N}\sum_{i\in\mathcal{N}}{k}_2|\delta_i||\nabla_t f_i(x_i,t)|\leq\frac{1}{N}{k}_3|\delta||x|+\frac{1}{\sqrt{N}}{k}_4|\delta|,\nonumber
\end{align}
where ${k}_3={k}_2\max_{i\in\mathcal{N}}\{|{\dot{H}_i}|\}$, ${k}_4={k}_2\max_{i\in\mathcal{N}}\{|{\dot{R}_i}|\}$ and the second inequality is obtained by $|\nabla_t f_i(x_i,t)|\leq{|\dot{H}_i}||x_i|+{|\dot{R}_i|}$. Consider a Lyapunov function candidate $V_2=|\tilde{x}|^2/2$. Thus,
\begin{align}\label{eq.dotv2}
\dot{V}_2=&-\frac{k}{N}\tilde{x}^{\top}\textbf{1}^{\top}_N\nabla F(x,t)+\tilde{x}^{\top}\Lambda_2+\tilde{x}^{\top}\Lambda_1\nonumber\\
=&-\frac{k}{N}\tilde{x}^{\top}\textbf{1}^{\top}_N\nabla F(1_N\otimes(\tilde{x}+r^*),t)+\tilde{x}^{\top}\Lambda_2-\frac{k}{N}\tilde{x}^{\top}\textbf{1}^{\top}_N\nonumber\\
&\times\left(\nabla F(x,t)-\nabla F(1_N\otimes(\tilde{x}+r^*),t)\right)+\tilde{x}^{\top}\Lambda_1.
\end{align}
Invoking Assumption \ref{assumption.lipschitz.stronglyconvex}, $f(r(t),t)$ is uniformly $h_1$-strongly convex with respect to $r(t)$. This implies
\begin{align}\label{eq.v21}
&-\frac{k}{N}\tilde{x}^{\top}\textbf{1}^{\top}_N\nabla F(1_N\otimes(\tilde{x}+r^*),t)\nonumber\\
\leq&-\frac{k}{N}\tilde{x}^{\top}\left(\nabla f(\tilde{x}+r^*,t)-\nabla f(r^*,t)\right)\leq-\frac{kh_1}{N}|\tilde{x}|^2.
\end{align}
By Assumption \ref{assumption.boundedness.hessian}, $\nabla f_i(x_i,t)$ is uniformly $\theta_i$-Lipschitz with $\theta_i=\min_{t\geq0}|H_i(t)|$. By denoting $\theta=\max_{i\in\mathcal{N}}\{\theta_i\}$, we have 
\begin{align}\label{eq.v22}
&-\frac{k}{N}\tilde{x}^{\top}\textbf{1}^{\top}_N\left(\nabla F(x,t)-\nabla F\left(1_N\otimes(\tilde{x}+r^*),t\right)\right)\nonumber\\
\leq& \frac{k\theta}{\sqrt{N}}|\tilde{x}||e|\leq\sigma_2\frac{h_1}{N}|\tilde{x}|^2+b_2|e|^2,
\end{align}
where $\sigma_2>0$ and $b_2=k^2\theta^2/(4\sigma_2h_1)$. We also conclude that
\begin{align}\label{eq.v23}
\tilde{x}^{\top}\Lambda_2\leq&\frac{1}{N}{k}_3|\tilde{x}||\delta||x|+\frac{1}{\sqrt{N}}{k}_4|\tilde{x}||\delta|\nonumber\\
\leq&\frac{{k}_3}{\sqrt{N}}|\delta||\tilde{x}|^2+\frac{{k}_3|\delta|}{N}|\tilde{x}||e|+\frac{{k}_3|r^*|+{k}_4}{\sqrt{N}}|\tilde{x}||\delta|\nonumber\\
\leq&b_3|\delta||\tilde{x}|^2+(\sigma_3+\sigma_4)\frac{h_1}{N}|\tilde{x}|^2+(b_4|e|^2+b_5)|\delta|^2,
\end{align}
where $\sigma_3,\sigma_4>0$, $b_3={k}_3/\sqrt{N}$, $b_4={k}_3^2/(4\sigma_3h_1 N)$, $b_5=({k}_3|r^*|+{k}_4)^2/(4\sigma_4h_1)$. With a positive constant $\sigma_5$, we have 
\begin{align}\label{Lambda1}
\tilde{x}^{\top}\Lambda_1\leq\sigma_5\frac{h_1}{N}|\tilde{x}|^2+b_6|\tilde{x}|^2+b_7|e|^2,
\end{align}
where $b_6=h_2N\sqrt{m}/h_1$ and $b_7=mN^2h_2^2/(4\sigma_5h_1^3)$. Substituting \eqref{eq.v21}, \eqref{eq.v22}, \eqref{eq.v23} and \eqref{Lambda1} into \eqref{eq.dotv2} yields
\begin{align}\label{eq.dotv2bound}
\dot{V}_2\leq&\bigg(\sum_{i=2}^{5}\sigma_i-k\bigg)\frac{h_1}{N}|\tilde{x}|^2+b_6|\tilde{x}|^2+b_3|\delta||\tilde{x}|^2\nonumber\\
&+(b_2+b_7)|e|^2+(b_4|e|^2+b_5)|\delta|^2.
\end{align}

{\em{(3)} Closed-loop synthesis.} Given $V_1$ and $V_2$, we further consider the composite Lyapunov function candidate of $e$ and $\tilde{x}$ subsystems, i.e., $V=V_1+V_2$. Combining \eqref{eq.dotv1boound} and \eqref{eq.dotv2bound} yields
\begin{align}\label{eq.v}
\dot{V}=&\dot{V}_1+\dot{V}_2
\leq-l_1|\tilde{x}|^2-l_2|e|^2+W_3+m\epsilon_1 N^2\bar{\beta}\eta_t,
\end{align}
where $l_1=(k-\sum_{i=1}^{5}\sigma_i)h_1/N-b_6$, $l_2=2\lambda_2(L)\bar{\alpha}-b_1-b_2-b_7$ and $W_3=b_3|\tilde{x}|^2|\delta|+b_4|e|^2|\delta|^2+b_5|\delta|^2$. Then, one may first choose $k>h_2N\sqrt{m}/h_1^2+\epsilon_3$ with $\epsilon_3>0$ such that $l_1>0$. By choosing $\bar{\alpha}>(b_1+b_2+b_7)/(2\lambda_2(L))$, we have $l_2>0$. Integrating both sides of \eqref{eq.v} yields
\begin{align}\label{eq.integrable}
&V(t)+l_1\int_{0}^{\infty} |\tilde{x}|^2 \,dt+l_2\int_{0}^{\infty} |e|^2 \,dt-\int_{0}^{\infty} W_3\,dt\nonumber\\
\leq& V(0)+m\epsilon_1N^2\bar{\beta}/c <\infty.
\end{align}
We next give the boundedness analysis of $\int_{0}^{\infty} W_3\,dt$. By Lemma \ref{xbounded}, $x_i$ is bounded in a finite time, and so is $e$. With the fact that $r^*$ is bounded concluded by Assumption \ref{assumption.boundedness.hessian}, we know $\tilde{x}$ is also bounded in a finite time by \eqref{eq.xetr}. Moreover, by \cite{fei12}, there must be a finite time instant $T$ such that for any $0\leq t<T$, $\delta$ is bounded, and for any $t\geq T$, $\delta_i=z_i(t)-\bar{H}(t)=0$, $\forall i\in\mathcal{N}$. Thus, it is clear that for any $0\leq t<T$, $W_3$ is bounded, which together with the continuity of $e$, $\tilde{x}$ and $\delta$ given by \eqref{eq.e}, \eqref{eq.tildex} and \eqref{eq.z}, respectively, results in that $\int_{0}^{T} W_3\,dt$ is bounded. Note that $\int_{0}^{\infty} W_3\,dt=\int_{0}^{T} W_3\,dt+\int_{T}^{\infty} W_3\,dt=\int_{0}^{T} W_3\,dt$, implying the boundedness of $\int_{0}^{\infty} W_3\,dt$. By recalling \eqref{eq.integrable}, we know $V$ is bounded, and $e,\tilde{x}\in\mathcal{L}_2$. It follows from $V_1$ and $V_2$ that $e$, $\tilde{x}$, $\alpha_{ij}$ and $\beta_{ij}$ are bounded, and thus $x$ is bounded. This implies the boundedness of $\nabla f_i(x_i,t)$ and $\nabla_t f_i(x_i,t)$. Then, by \eqref{eq.e} and \eqref{eq.tildex}, $\dot{e}$ and $\dot{\tilde{x}}$ are bounded. By Barbalat's Lemma \cite{Nonlinear02khalil}, we have $\lim_{t\rightarrow \infty}e(t)=0$ and $\lim_{t\rightarrow \infty}\tilde{x}(t)=0$, which further implies $\lim_{t\rightarrow \infty}{x}_i(t)=r^*(t)$.

\subsection{ Reanalysis of Finite Escape Time for Corollary \ref{distributed.corollary}}

\begin{lemma}\label{co.xbounded}
    Suppose Assumptions \ref{assumption.graph1}, \ref{assumption.boundedness.hessian}, \ref{assumption.newboundedness} hold. For system \eqref{eq.newestimator} and \eqref{eq.ui}-\eqref{eq.alpha} with Algorithm \ref{alg}, the state $x_i(t)$ does not have a finite escape time, i.e., for any $\bar{t}\in[0,\infty)$  the state $x_i(t)$ is bounded for all $t\in[0,\bar{t}]$. 
    \end{lemma}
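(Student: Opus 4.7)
The plan is to follow the same three-step architecture as in the proof of Lemma \ref{xbounded}, adapting each step to accommodate the nonquadratic additions $\hat{f}_i(x_i,t)$ and the state-based estimator \eqref{eq.newestimator}. First I would expand the scalar dynamics $\dot{x}_{ip}$ of the optimizer \eqref{eq.ui}--\eqref{eq.alpha} using $f_i$ in the general form \eqref{eq.ffi}. By Assumption \ref{assumption.newboundedness}(\ref{c}), $\nabla_t \hat{f}_i(x_i,t) = g_i(t)x_i + s_i(x_i,t)$, so the expression of $\dot{x}_{ip}$ takes the same structural form as \eqref{eq.fenliang} with the right-hand side augmented by a $-k\nabla \hat{f}_i$ term and an $-\hat{z}_i^{-1}(g_i x_i + s_i)$ term. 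The consensus-coordinator contributions $W_1$ and $W_2$ remain unchanged since the coordinator itself is not modified.

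Second, I would isolate $W_1$ and $W_2$ as in \eqref{eq.fenliang} and call the remainder $\tilde{g}_i(x_{ip},t)$, playing the role of $g_i(x_{ip},t)$ in \eqref{eq.example}. The goal is to show that $\tilde{g}_i$ is piecewise continuous in $t$ and globally Lipschitz in $x_{ip}$ on any $[0,\bar{t}]$, so that \cite[Theorem 3.2]{Nonlinear02khalil} delivers statement \textbf{S1}. Boundedness of $H_i$, $\dot{H}_i$, $R_i$, $\dot{R}_i$ comes from Assumption \ref{assumption.boundedness.hessian}; boundedness of $g_i$ and of $s_i$ via $\mu_5\bar{s}(t)$ comes from Assumption \ref{assumption.newboundedness}(\ref{c}); the uniform global Lipschitz property of $\nabla\hat{f}_i$ is Assumption \ref{assumption.newboundedness}(\ref{a}); and the DZA, reparameterized with $h_0 = \gamma\mu_1/N$, still enforces $|\hat{z}_i^{-1}(t)| \leq 1/h_0$ pointwise in $t$. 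Taken together, these facts yield a Lipschitz constant for $\tilde{g}_i$ in $x_{ip}$ that is bounded uniformly on $[0,\bar{t}]$, giving \textbf{S1}.

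Third, the analysis of $W_1+W_2$ is identical to that in Lemma \ref{xbounded}: picking $(i^*,p^*) \in \arg\max_{i\in\mathcal{N},p\in\mathcal{M}}|x_{ip}|$, the maximality of $|x_{i^*p^*}|$ together with the non-negativity of the running integrals $\int_0^t |x_i-x_j|^2\,dt$ and $\int_0^t |x_i-x_j|_1\,dt$ forces $W_1+W_2$ to have the opposite sign to $x_{i^*p^*}$ (or vanish), which is statement \textbf{S2}. Combining \textbf{S1} and \textbf{S2} via the same comparison argument then rules out a finite escape time for $x_{i^*p^*}$ and hence for every $x_i$.

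The main obstacle I anticipate is verifying that the new state-based gain $\omega_{ij}(t) > ((N-1)/2)(|\dot{\rho}_i|_\infty+|\dot{\rho}_j|_\infty)$ in \eqref{eq.newestimator}, which implicitly depends on $\dot{x}_i$ through $\dot{\rho}_i$, does not create a hidden unbounded feedback loop into the optimizer. I plan to dispatch this by observing that the estimator output enters the optimizer \emph{only} through the inverse $\hat{z}_i^{-1}$ of the DZA-shaped matrix; the DZA caps $|\hat{z}_i^{-1}|$ by $1/h_0$ regardless of how $z_i$ or $\omega_{ij}$ behave, so for the purposes of the finite-time existence argument the optimizer can be regarded as driven by an exogenous, uniformly bounded, piecewise-continuous signal. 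This decoupling is precisely what allows the Lipschitz bound on $\tilde{g}_i$ in step two to hold without any a priori control over $|\dot{\rho}_i|$, and thereby closes the argument.
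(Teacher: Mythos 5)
Your proposal follows essentially the same route as the paper's own proof: the same decomposition of the scalar dynamics into a piecewise-continuous part with at most linear growth in $x_{ip}$ (handled via Khalil's existence theorem, giving \textbf{S1}, with the extra terms from $\hat{f}_i$ absorbed using Assumption \ref{assumption.newboundedness} and the DZA bound $|\hat{z}_i^{-1}|\leq 1/h_0$) plus the unchanged sign analysis of $W_1+W_2$ at a maximizing index $(i^*,p^*)$ (giving \textbf{S2}), combined in the same way. Your explicit observation that the state-based gain $\omega_{ij}$ cannot feed back into the optimizer except through the DZA-capped $\hat{z}_i^{-1}$ is left implicit in the paper but is correct and consistent with it.
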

        
    \begin{proof}
    Denote $x_{ip}\in\mathbb{R}$ the $p$-th row element of $x_{i}$, where $i\in\mathcal{N}$ and $p\in\mathcal{M}$. Similar subscripts also appear in certain matrices that follow. We take $x_{ip}$ as an example to check its boundedness in a finite time. According to Assumption \ref{assumption.newboundedness}.(\ref{b}), without loss of generality, let $\nabla\hat{f}_i(x_i,t)=Y(x_i,t)+Z(x_i,t)$ which satisfies $|Y(x_i,t)|\leq\mu_2|x_i|$ and $|Z(x_i,t)|\leq\mu_3$. Then the dynamics of $x_{ip}$ is given by
    \begin{align}\label{co.eq.fenliang}
    \dot{x}_{ip}=&-k(H_{ip}x_{i}+R_{ip}+Y_{ip}+Z_{ip})\nonumber\\
    &-{\hat{z}_{ip}}^{-1}(\dot{H}_{i}x_{i}+\dot{R}_{i}+g_ix_i+s_i)\nonumber\\
    &+c_1(t)\sum_{j\in\mathcal{N}_{i}}S(x_{ip}-x_{jp})+W_1+W_2,
    \end{align}
    where $0\leq c_1(t)=\int_{0}^{t}m\epsilon_1 \eta_t\,dt=(m\epsilon_1/\epsilon_2)(1-\eta_t)<\infty$, $W_1=-\sum_{j\in\mathcal{N}_{i}}(\int_{0}^{t} |x_{i}-x_j|^2 \,dt) (x_{ip}-x_{jp})$, and $W_2=-\sum_{j\in\mathcal{N}_{i}}(\int_{0}^{t}|x_{i}-x_j|_1\,dt)S(x_{ip}-x_{jp})$. We first consider the first half of \eqref{eq.fenliang}, i.e.
    \begin{align}\label{co.eq.example}
    \dot{x}_{ip}=&-k(H_{ip}x_{i}+R_{ip}+Y_{ip}+Z_{ip})\nonumber\\
    &-{\hat{z}_{ip}}^{-1}(\dot{H}_{i}x_{i}+\dot{R}_{i}+g_ix_i+s_i)\nonumber\\
    &+c_1(t)\sum_{j\in\mathcal{N}_{i}}S(x_{ip}-x_{jp})=:g_i(x_{ip},t).
    \end{align}
    For all $t\in[0,\bar{t}]$, we know $g_i(x_{ip},t)$ is piecewise continuous with respect to $t$. With Assumptions \ref{assumption.graph1}, \ref{assumption.boundedness.hessian}, \ref{assumption.newboundedness} satisfied, we know $g_i(x_{ip},t)$ is global Lipschitz with respect to $x_{ip}$. Then, by \cite[Theorem 3.2]{Nonlinear02khalil}, we know \eqref{co.eq.example} has a unique solution over $[0,\bar{t}]$. In view of the fact that $\bar{t}$ can be arbitrarily large, we make the following statement,
    \begin{itemize}
        \item [\bf S1.] $x_{ip}$ in system \eqref{co.eq.example} does not have a finite escape time, $\forall i\in\mathcal{N}$ and $\forall p\in\mathcal{M}$.
    \end{itemize}
    We further consider $W_1$ and $W_2$ in system \eqref{eq.fenliang}. Assume that $i^*,p^*\in\mathop{\arg\max}_{i\in\mathcal{N},p\in\mathcal{M}}|x_{ip}|$. Without loss of generality, let $x_{i^*p^*}>0$. If $x_{i^*}=x_j$, $\forall j\in\mathcal{N}_{i^*}$, we have $W_1=W_2=0$. If there exists at least one $j\in\mathcal{N}_{i^*}$ such that $x_{i^*}\neq x_j$, it follows that $W_1,W_2\leq 0$. Similarly, assume that $x_{i^*p^*}<0$. If $x_{i^*}=x_j$, $\forall j\in\mathcal{N}_{i^*}$, we have $W_1=W_2=0$. If there exists at least one $j\in\mathcal{N}_{i^*}$ such that $x_{i^*}\neq x_j$, then we have $W_1,W_2\geq0$. Therefore, we make another statement,
    \begin{itemize}
        \item [\bf S2.] $W_1+W_2\leq0$ if $x_{i^*p^*}>0$ and $W_1+W_2\geq0$ if $x_{i^*p^*}<0$.
    \end{itemize}   
    
By combining statements {\bf S1} and {\bf S2}, we conclude from \eqref{eq.fenliang} that the trajectory of $x_{i^*p^*}$ does not escape to infinity at a finite time. With the fact that $i^*,p^*\in\mathop{\arg\max}_{i\in\mathcal{N},p\in\mathcal{M}}|x_{ip}|$, thus, $x_i(t)$ does not have a finite escape time, $\forall i\in\mathcal{N}$.
\end{proof} 

\subsection{Proof of Corollary \ref{distributed.corollary}}\label{appendix.proof4}
Define a diagonal matrix $\Omega$, whose order is equal to the number of edges in the graph $\mathcal{G}$. The nodes corresponding to non-zero elements in column $i$ of matrix $D$ are denoted as $i_1$ and $i_2$, respectively. The diagonal element $\Omega_{ii}$ of $\Omega$ is defined as $\omega_{i_1i_2}$. Then, by defining $z=\col(z_1,\cdots,z_N)$ and  $\rho=\col(\rho_1,\cdots,\rho_2)$, system \eqref{eq.newestimator} can be rewritten as $\dot{z}=-(D\Omega)_{\otimes}\sgn(D^{\top}_{\otimes}z)+\dot{\rho}$. Define consensus error $\bar\delta=M_{\otimes}z$. Then, the dynamics of $\delta$ subsystem can be written as
\begin{align}\label{eq.delta}
\dot{\bar\delta}=&-(D\Omega)_{\otimes}\sgn(D^{\top}_{\otimes}\bar\delta)+M_{\otimes}\dot{\rho}.
\end{align}
{\bf{(1)} Analysis of $\bar\delta$ subsystem.} 
Then, consider the Lyapunov function candidate 
$W=\tr({\bar\delta}^{\top}\bar\delta)$. Its derivative is given by
\begin{align}\label{eq.dotW}
\dot{W}=&-2\tr({\bar\delta}^{\top}(D\Omega)_{\otimes}\sgn(D^{\top}_{\otimes}{\bar\delta}))+2\tr({\bar\delta}^{\top}M_{\otimes}\dot{\rho}).
\end{align}
We first analyze its last term. It can be concluded that
\begin{align}\label{eq.disange1}
2\tr({\bar\delta}^{\top}M_{\otimes}\dot{\rho})=&\frac{1}{N}\sum_{i=1}^{N}\sum_{j=1}^{N}({\bar\delta}_i-{\bar\delta}_j)^{\top}(\dot{\rho}_i-\dot{\rho}_j)\nonumber\\
\leq& \frac{1}{N}\sum_{i=1}^{N}\sum_{j=1}^{N}|{\bar\delta}_i-{\bar\delta}_j|_1|\dot{\rho}_i-\dot{\rho}_j|_{\infty},
\end{align}
which is obtained by Hölder's inequality \cite{Nonlinear02khalil}. By triangular inequality $|\dot{\rho}_i-\dot{\rho}_j|_{\infty}\leq|\dot{\rho}_i|_{\infty}+|\dot{\rho}_j|_{\infty}$, we have
\begin{align}\label{eq.disange3}
2\tr({\bar\delta}^{\top}M_{\otimes}\dot{\rho})\leq& \frac{1}{N}\sum_{i=1}^{N}\sum_{j=1}^{N}(|\dot{\rho}_i|_{\infty}+|\dot{\rho}_j|_{\infty})|{\bar\delta}_i-{\bar\delta}_j|_1\nonumber\\
\leq& \max_i\left\{\sum_{j=1,j\neq i}^{N}(|\dot{\rho}_i|_{\infty}+|\dot{\rho}_j|_{\infty})|{\bar\delta}_i-{\bar\delta}_j|_1\right\}\nonumber\\
\leq& \frac{N-1}{2}\sum_{i\in\mathcal{N}}\sum_{j\in\mathcal{N}_i}(|\dot{\rho}_i|_{\infty}+|\dot{\rho}_j|_{\infty})|{\bar\delta}_i-{\bar\delta}_j|_1,
\end{align}
where the last inequality is obtained by using Assumption \ref{assumption.graph1}. Then, with the similar analysis in \eqref{eq.ese}, we have 
\begin{align}\label{eq.dierge}
-2\tr({\bar\delta}^{\top}(D\Omega)_{\otimes}\sgn(D^{\top}_{\otimes}{\bar\delta}))=-\sum_{i\in\mathcal{N}}\sum_{j\in\mathcal{N}_i}\omega_{ij}||{\bar\delta}_i-{\bar\delta}_j||_1.
\end{align}
Substituting \eqref{eq.disange3} and \eqref{eq.dierge} into \eqref{eq.dotW} yields
\begin{align}
\dot{W}\leq&\sum_{i\in\mathcal{N}}\sum_{j\in\mathcal{N}_i}\left(\frac{N-1}{2}(\bar{\chi}_i+\bar{\chi}_j)-\omega_{ij}\right)||{\bar\delta}_i-{\bar\delta}_j||_1.
\end{align}
With the similar analysis in \eqref{eq.ee}, if we select $\omega_{ij}>((N-1)/2)(|\dot{\rho}_i|_\infty+|\dot{\rho}_j|_\infty)$, we have $\dot{W}\leq-2\sqrt{\lambda_2}W^{1/2}$. According to \cite{Wang10}, there exists a finite time $T_1$ such that for all $ t\geq T_1$, ${\bar\delta}(t)=0$, i.e., $z_i(t)=z_j(t)$, $\forall i,j\in\mathcal{N}$. Moreover, by using Assumption \ref{assumption.graph1}, we have
\begin{align}
    \textbf{1}^{\top}_N\dot{\xi}=-(1^{\top}_ND\Omega)_{\otimes}\sgn(D^{\top}_{\otimes}z)=0.
\end{align}
By recalling the initial condition $\sum_{i\in\mathcal{N}}\xi_i(0)=0$, we have $\sum_{i\in\mathcal{N}}\xi_i(t)=0$. Combining \eqref{eq.newestimator} yields
\begin{align}\label{eq.hengdeng}
\sum_{i\in\mathcal{N}}z_i(t)=\sum_{i\in\mathcal{N}}\rho_i,~~~\forall t\geq 0.
\end{align}
With the fact that $z_i(t)=z_j(t)$, $\forall i,j\in\mathcal{N}$, $\forall t\geq T_1$, we thus have $
z_i=\bar{\rho}:=\sum_{i\in \mathcal{N}}\rho_i(x_i,t)/N$, $\forall t\geq T_1$.

Define $\tilde{z}=\textbf{1}^{\top}_Nz/N-\bar{\rho}$. Then we have
\begin{align}
    z={\bar\delta}+1_N\otimes\tilde{z}+1_N\otimes \bar{\rho}.
\end{align} 
With the fact that $\textbf{1}^{\top}_N{\bar\delta}\equiv 0$, we have 
\begin{align}
    \sum_{i\in\mathcal{N}}z_i(t)=N\tilde{z}+\sum_{i\in\mathcal{N}}\rho_i.
    \end{align}
By recalling \eqref{eq.hengdeng}, we know that $\tilde{z}\equiv0$. This means that ${\bar\delta}_i=z_i-\bar{\rho}=:\delta_i$. Therefore, the finite-time convergence of ${\bar\delta}_i$ is equivalent to the finite-time convergence of $\delta_i$, which will be used for the following analysis.

{\bf{(2)} Analysis of $e$ subsystem.} Consider the function
\begin{align}
V_1=e^{\top}e+\frac{1}{2}\sum_{i\in\mathcal{N}}\sum_{j\in \mathcal{N}_i}(\alpha_{ij}-\bar{\alpha})^2+\frac{1}{2}\sum_{i\in\mathcal{N}}\sum_{j\in \mathcal{N}_i}(\beta_{ij}-\bar{\beta})^2,  \nonumber
\end{align}
where $\bar{\alpha}$ and $\bar{\beta}$ are positive constants to be determined. Its derivative along \eqref{eq.alpha} and \eqref{eq.e} is given by
\begin{align}\label{co.eq.dotv1}
    \dot{V}_1=&\sum_{i\in\mathcal{N}}\sum_{j\in \mathcal{N}_i}(\alpha_{ij}-\bar{\alpha})\dot{\alpha}_{ij}+\sum_{i\in\mathcal{N}}\sum_{j\in \mathcal{N}_i}(\beta_{ij}-\bar{\beta})\dot{\beta}_{ij}\nonumber\\
    &-2e^{\top}\bar{L}_{\otimes}e+\Lambda_3+\Lambda_4,
\end{align}
where $\Lambda_3=-2e^{\top}(DB)_{\otimes}S(D^{\top}_{\otimes}e)$ and $\Lambda_4=2e^{\top}M_{\otimes}\phi$. Note
\begin{align}\label{co.eq.ese}
\Lambda_3=&-2\sum_{i\in\mathcal{N}}\sum_{j\in\mathcal{N}_i}\beta_{ij}e_i^{\top}S(e_i-e_j)\nonumber\\
=&-\sum_{i\in\mathcal{N}}\sum_{j\in\mathcal{N}_i}\beta_{ij}(e_i-e_j)^{\top}S(e_i-e_j)\nonumber\\
\leq& -\sum_{i\in\mathcal{N}}\sum_{j\in\mathcal{N}_i}\beta_{ij}(|e_i-e_j|_1-m\epsilon_1 \eta_t),
\end{align}
where the second equality follows from Assumption \ref{assumption.graph1} and and third one follows the fact that $\beta_{ij}\geq0$ and $yS(y)\geq|y|-\epsilon_1 \eta_t$ with $y\in\mathbb{R}$. By \eqref{eq.phi}, we obtain $|\phi|\leq {\phi}_1|x|+{\phi}_2$ with 
\begin{align}
    \phi_1=&|\diag\{k(H_i+\mu_2I_m)+\hat{z}_{i}^{-1}(\dot{H}_i+\mu_4I_m)\}|,\\
    \phi_2=&\left|\left[ \begin{array}{c}
        k(R_1+\mu_31_m)+\hat{z}_{1}^{-1}(\dot{R}_1+|s(x_1,t)|1_m)\\
        \vdots\\
        k(R_N+\mu_31_m)+\hat{z}_{N}^{-1}(\dot{R}_N+|s(x_N,t)|1_m)\\
        \end{array} \right]\right|.
\end{align}
The boundedness of $H_i$, $\dot{H}_i$, $\hat{z}_{i}^{-1}$, $R_i$, $\dot{R}_i$ and $|s(x_i,t)|$ implies the boundedness of ${\phi}_1$ and ${\phi}_2$. With $\bar{m}:=|M_{\otimes}|$, we have
\begin{align}\label{co.eq.ephi}
\Lambda_4\leq&2\bar{m}{\phi}_1|e|^2+2\bar{m}{\phi}_1\sqrt{N}|e||\tilde{x}|+2\bar{m}({\phi}_1\sqrt{N}|r^*|+{\phi}_2)|e|\nonumber\\
\leq&b_1|e|^2+\frac{\sigma_1\mu_1}{N}|\tilde{x}|^2+2\bar{m}({\phi}_1\sqrt{N}|r^*|+{\phi}_2)|e|,
\end{align}
where $\sigma_1>0$, $b_1=2\bar{m}{\phi}_1+\bar{m}^2{\phi}_1^2N^2/(\sigma_1\mu_1)$, the first inequality is obtained by \eqref{eq.xetr}, and the second inequality follows from Young's inequality \cite{KKKbook}. By Assumption \ref{assumption.graph1} again, we have
\begin{align}\label{co.eq.alphaij}
\sum_{i\in\mathcal{N}}\sum_{j\in \mathcal{N}_i}(\alpha_{ij}-\bar{\alpha})\dot{\alpha}_{ij}=&\sum_{i\in\mathcal{N}}\sum_{j\in \mathcal{N}_i}(\alpha_{ij}-\bar{\alpha})|e_i-e_j|^2\nonumber\\
=&2e^{\top}(\bar{L}-\bar{\alpha}L)_{\otimes}e.
\end{align}
Substituting \eqref{eq.alpha}, \eqref{co.eq.ese}, \eqref{co.eq.ephi} and \eqref{co.eq.alphaij} into \eqref{co.eq.dotv1} yields
\begin{align}
\dot{V}_1\leq& -2\bar{\alpha}e^{\top}L_{\otimes}e+b_1|e|^2-\bar{\beta}\sum_{i\in\mathcal{N}}\sum_{j\in\mathcal{N}_i}|e_i-e_j|_1+\frac{\sigma_1\mu_1}{N}|\tilde{x}|^2\nonumber\\
&+2\bar{m}({\phi}_1\sqrt{N}|r^*|+{\phi}_2)|e|+m\epsilon_1 N^2\bar{\beta}\eta_t.
\end{align}
With the fact that $e^{\top}(1_N\otimes I_m)\equiv 0$, we conclude from \cite{Olfati} that $-e^{\top}L_{\otimes}e\leq-\lambda_2(L)|e|^2$. It can be shown that
\begin{align}
&-\sum_{i\in\mathcal{N}}\sum_{j\in\mathcal{N}_i}|e_i-e_j|_1\nonumber\\
=&-2|D^{\top}_{\otimes}e|_1\leq-2\sqrt{e^{\top}(DD^{\top})_{\otimes}e}\leq-2\sqrt{\lambda_2(L)}|e|,
\end{align}
where the last inequality follows from $L=DD^{\top}$. By selecting $\bar{\beta}=\bar{m}({\phi}_1\sqrt{N}|r^*|+{\phi}_2)/\sqrt{\lambda_2(L)}$, we have
\begin{align}\label{co.eq.dotv1boound}
\dot{V}_1\leq& -2\lambda_2(L)\bar{\alpha}|e|^2+b_1|e|^2+\frac{\sigma_1\mu_1}{N}|\tilde{x}|^2+m\epsilon_1 N^2\bar{\beta}\eta_t.
\end{align}

{\bf{(3)} Analysis of $\tilde{x}$ subsystem.} Denote $\rho=\sum_{i\in \mathcal{N}}\rho_i(x_i,t)$. Here we have $\dot{r}^*=-{\rho}^{-1}(t)\sum_{i\in\mathcal{N}}\nabla_t f_i(r^*,t)$. Then, adding $\pm {\rho}^{-1}\sum_{i\in\mathcal{N}}\nabla_t f_i(x_i,t)$ in \eqref{eq.tildex} yields
\begin{align}
\dot{\tilde{x}}=&-\frac{k}{N}\textbf{1}^{\top}_N\nabla F(x,t)+\Lambda_1+\Lambda_2,
\end{align}
where $\Lambda_1={\rho}^{-1}\sum_{i\in\mathcal{N}}\nabla_t f_i(r^*,t)-\nabla_t f_i(x_i,t)$ and $\Lambda_2=\sum_{i\in\mathcal{N}}(\bar{\rho}^{-1}-\hat{z}^{-1}_i)\nabla_t f_i(x_i,t)/N$. The latter follows from $\rho^{-1}=\bar{\rho}^{-1}/N$. We conclude that
\begin{align}\label{co.eq.lambda1}
\Lambda_1=&-{\rho}^{-1}\sum_{i\in\mathcal{N}}(\dot{H}_i+g_i)(e_i+\tilde{x})+s_i(x_i,t)-s_i(r^*,t)\nonumber\\
\leq& \frac{(h_2+\mu_4)\sqrt{mN}}{\mu_1}|e+1_N\otimes\tilde{x}|+\frac{2N\mu_5}{\mu_1}\bar{s}(t).
\end{align}
Note that $\bar{\rho}^{-1}-\hat{z}^{-1}_i=\bar{\rho}^{-1}(\hat{z}_i-\bar{\rho})\hat{z}^{-1}_i$. By using the similar analysis in Lemma \ref{condition}, we have $|\hat{z}_i(t)-\bar{\rho}|\leq k_1|\delta_i|$, $\forall t\geq 0$. Along with the boundedness of $\bar{\rho}^{-1}$ and $\hat{z}^{-1}_i$, there must be a positive constant ${k}_2={k}_1|\bar{\rho}^{-1}|$ $\max_{i\in\mathcal{N}}\{|\hat{z}^{-1}_i|\}$ such that
\begin{align}
\Lambda_2\leq\frac{1}{N}\sum_{i\in\mathcal{N}}{k}_2|\delta_i||\nabla_t f_i(x_i,t)|\leq\frac{1}{N}{k}_3|\delta||x|+\frac{1}{\sqrt{N}}{k}_4|\delta|,\nonumber
\end{align}
where ${k}_3={k}_2\max_{i\in\mathcal{N}}\{|\dot{H}_i|+|g_i|\}$, ${k}_4={k}_2\max_{i\in\mathcal{N}}\{|\dot{R}_i|+|s_i|\}$ and the second inequality is obtained by 
\begin{align}
    |\nabla_t f_i(x_i,t)|\leq(|\dot{H}_i|+|g_i|)|x_i|+|\dot{R}_i|+|s_i|.
\end{align}
Consider a Lyapunov function candidate $V_2=|\tilde{x}|^2/2$. Thus,
\begin{align}\label{co.eq.dotv2}
\dot{V}_2=&-\frac{k}{N}\tilde{x}^{\top}\textbf{1}^{\top}_N\nabla F(x,t)+\tilde{x}^{\top}\Lambda_2+\tilde{x}^{\top}\Lambda_1\nonumber\\
=&-\frac{k}{N}\tilde{x}^{\top}\textbf{1}^{\top}_N\nabla F(1_N\otimes(\tilde{x}+r^*),t)+\tilde{x}^{\top}\Lambda_2-\frac{k}{N}\tilde{x}^{\top}\textbf{1}^{\top}_N\nonumber\\
&\times\left(\nabla F(x,t)-\nabla F(1_N\otimes(\tilde{x}+r^*),t)\right)+\tilde{x}^{\top}\Lambda_1.
\end{align}
Invoking Assumption \ref{assumption.newboundedness}.(\ref{a}), $f(r(t),t)$ is uniformly $\mu_1$-strongly convex with respect to $r(t)$. This implies
\begin{align}\label{co.eq.v21}
&-\frac{k}{N}\tilde{x}^{\top}\textbf{1}^{\top}_N\nabla F(1_N\otimes(\tilde{x}+r^*),t)\nonumber\\
\leq&-\frac{k}{N}\tilde{x}^{\top}\left(\nabla f(\tilde{x}+r^*,t)-\nabla f(r^*,t)\right)\leq-\frac{k\mu_1}{N}|\tilde{x}|^2.
\end{align}
Moreover, $\nabla f_i(x_i,t)$ is uniformly $\theta_i$-Lipschitz with $\theta_i=\min_{t\geq0}|\rho_i(x_i,t)|$. By denoting $\theta=\max_{i\in\mathcal{N}}\{\theta_i\}$, we have 
\begin{align}\label{co.eq.v22}
&-\frac{k}{N}\tilde{x}^{\top}\textbf{1}^{\top}_N\left(\nabla F(x,t)-\nabla F\left(1_N\otimes(\tilde{x}+r^*),t\right)\right)\nonumber\\
\leq& \frac{k\theta}{\sqrt{N}}|\tilde{x}||e|\leq\sigma_2\frac{\mu_1}{N}|\tilde{x}|^2+b_2|e|^2,
\end{align}
where $\sigma_2>0$ and $b_2=k^2\theta^2/(4\sigma_2\mu_1)$. We also conclude that
\begin{align}\label{co.eq.v23}
\tilde{x}^{\top}\Lambda_2\leq&\frac{1}{N}{k}_3|\tilde{x}||\delta||x|+\frac{1}{\sqrt{N}}{k}_4|\tilde{x}||\delta|\nonumber\\
\leq&\frac{{k}_3}{\sqrt{N}}|\delta||\tilde{x}|^2+\frac{{k}_3|\delta|}{N}|\tilde{x}||e|+\frac{{k}_3|r^*|+{k}_4}{\sqrt{N}}|\tilde{x}||\delta|\nonumber\\
\leq&b_3|\delta||\tilde{x}|^2+(\sigma_3+\sigma_4)\frac{\mu_1}{N}|\tilde{x}|^2+(b_4|e|^2+b_5)|\delta|^2,
\end{align}
where $\sigma_3,\sigma_4>0$, $b_3={k}_3/\sqrt{N}$, $b_4={k}_3^2/(4\sigma_3\mu_1 N)$, $b_5=({k}_3|r^*|+{k}_4)^2/(4\sigma_4\mu_1)$. With $\sigma_5,\sigma_6>0$, we have 
\begin{align}\label{co.Lambda1}
\tilde{x}^{\top}\Lambda_1\leq\sigma_5\frac{\mu_1}{N}|\tilde{x}|^2+b_6|\tilde{x}|^2+b_7|e|^2+\sigma_6\frac{\mu_1}{N}|\tilde{x}|^2+b_8\bar{s}^2(t),
\end{align}
where $b_6=(h_2+\mu_4)N\sqrt{m}/\mu_1$, $b_7=mN^2(h_2+\mu_4)^2/(4\sigma_5\mu_1^3)$ and $b_8=2N^2\mu_5/(4\sigma_6\mu_1^2)$. Substituting \eqref{co.eq.v21}, \eqref{co.eq.v22}, \eqref{co.eq.v23} and \eqref{co.Lambda1} into \eqref{co.eq.dotv2} yields
\begin{align}\label{co.eq.dotv2bound}
\dot{V}_2\leq&\bigg(\sum_{i=2}^{6}\sigma_i-k\bigg)\frac{\mu_1}{N}|\tilde{x}|^2+b_6|\tilde{x}|^2+b_3|\delta||\tilde{x}|^2\nonumber\\
&+(b_2+b_7)|e|^2+(b_4|e|^2+b_5)|\delta|^2+b_8\bar{s}^2(t).
\end{align}

{\bf{(4)} Closed-loop synthesis.} Given $V_1$ and $V_2$, we further consider the composite Lyapunov function candidate of $e$ and $\tilde{x}$ subsystems, i.e., $V=V_1+V_2$. Combining \eqref{co.eq.dotv1boound} and \eqref{co.eq.dotv2bound} yields
\begin{align}\label{co.eq.v}
\dot{V}=&\dot{V}_1+\dot{V}_2
\leq-l_1|\tilde{x}|^2-l_2|e|^2+b_8\bar{s}^2+W_3+m\epsilon_1 N^2\bar{\beta}\eta_t,
\end{align}
where $l_1=(k-\sum_{i=1}^{6}\sigma_i)\mu_1/N-b_6$, $l_2=2\lambda_2(L)\bar{\alpha}-b_1-b_2-b_7$ and $W_3=b_3|\tilde{x}|^2|\delta|+b_4|e|^2|\delta|^2+b_5|\delta|^2$. Then, one may first choose $k>(h_2+\mu_4)N\sqrt{m}/\mu_1^2+\epsilon_4$, where $\epsilon_4>0$ such that $l_1>0$ and choose $\bar{\alpha}>(b_1+b_2+b_7)/(2\lambda_2(L))$ such that $l_2>0$. Integrating both sides of \eqref{co.eq.v} yields
\begin{align}\label{co.eq.integrable}
&V(t)+l_1\int_{0}^{\infty} |\tilde{x}|^2 \,dt+l_2\int_{0}^{\infty} |e|^2 \,dt\nonumber\\
&-b_8\int_{0}^{\infty} \bar{s}^2(t)\,dt -\int_{0}^{\infty} W_3\,dt\leq V(0)+m\epsilon_1N^2\bar{\beta}/c <\infty.
\end{align}
According to Assumption \ref{assumption.newboundedness}.(\ref{c}), we know $\int_{0}^{\infty} \bar{s}^2(t)\,dt$ is bounded. We next give the boundedness analysis of $\int_{0}^{\infty} W_3\,dt$. By Lemma \ref{co.xbounded}, $x_i$ is bounded in a finite time, and so is $e$. With the fact that $r^*$ is bounded concluded by Assumption \ref{assumption.boundedness.hessian}, we know $\tilde{x}$ is also bounded in a finite time by \eqref{eq.xetr}. Moreover, according to {\bf{(1)} Analysis of $\delta$ subsystem}, there exies a finite time instant $T_1$ such that for any $0\leq t<T_1$, $\delta$ is bounded, and for any $t\geq T_1$, $\delta_i=z_i(t)-\bar{\rho}(t)=0$, $\forall i\in\mathcal{N}$. Thus, it is clear that for any $0\leq t<T_1$, $W_3$ is bounded, which together with the continuity of $e$, $\tilde{x}$ and $\delta$ given by \eqref{eq.e}, \eqref{eq.tildex} and \eqref{eq.newestimator}, respectively, results in that $\int_{0}^{T_1} W_3\,dt$ is bounded. Note that $\int_{0}^{\infty} W_3\,dt=\int_{0}^{T_1} W_3\,dt+\int_{T_1}^{\infty} W_3\,dt=\int_{0}^{T_1} W_3\,dt$, implying the boundedness of $\int_{0}^{\infty} W_3\,dt$. By recalling \eqref{co.eq.integrable}, we know $V$ is bounded, and $e,\tilde{x}\in\mathcal{L}_2$. It follows from $V_1$ and $V_2$ that $e$, $\tilde{x}$, $\alpha_{ij}$ and $\beta_{ij}$ are bounded, and thus $x$ is bounded. This implies the boundedness of $\nabla f_i(x_i,t)$ and $\nabla_t f_i(x_i,t)$. Then, by \eqref{eq.e} and \eqref{eq.tildex}, $\dot{e}$ and $\dot{\tilde{x}}$ are bounded. By Barbalat's Lemma \cite{Nonlinear02khalil}, we have $\lim_{t\rightarrow \infty}e(t)=0$ and $\lim_{t\rightarrow \infty}\tilde{x}(t)=0$, which further implies $\lim_{t\rightarrow \infty}{x}_i(t)=r^*(t)$.

\balance
\bibliographystyle{IEEEtran}
\bibliography{IEEEabrv,bib_abb}

\end{document}